\documentclass[%
 reprint,
superscriptaddress,
nofootinbib,
 amsmath,amssymb,
 aps,
pra,
]{revtex4-2}
\usepackage{graphicx}
\usepackage{dcolumn}
\usepackage{bm}
\usepackage{hyperref}
\usepackage{blindtext}
\usepackage{braket}
\usepackage{amsthm}
\usepackage{booktabs}
\usepackage[english]{babel}
\usepackage[caption=false]{subfig}
\usepackage{orcidlink}
\makeatletter
\let\l@en\l@english
\makeatother
\usepackage{algorithm2e}
\SetKwComment{Comment}{\# }{}
\RestyleAlgo{ruled}

\newcommand\Wp{\ket{W_P}}

\DeclareMathOperator{\blockdiag}{blockdiag}
\newtheorem{theorem}{Proposition}
\newtheorem{corollary}{Corollary}[theorem]

\begin{document}

\preprint{APS/123-QED}

\title{Constrained Quantum Optimization via Iterative Warm-Start XY-Mixers}

\author{David Bucher~\orcidlink{0009-0002-0764-9606}}
\email{david.bucher@aqarios.com}
\affiliation{Aqarios GmbH, Munich, Germany}
\affiliation{LMU Munich, Department for Computer Science, Munich, Germany}

\author{Maximilian Janetschek}
\affiliation{Aqarios GmbH, Munich, Germany}

\author{Michael Poppel~\orcidlink{0009-0005-1141-0974}}
\affiliation{Aqarios GmbH, Munich, Germany}
\affiliation{LMU Munich, Department for Computer Science, Munich, Germany}

\author{Jonas Stein~\orcidlink{0000-0001-5727-9151}}
\affiliation{LMU Munich, Department for Computer Science, Munich, Germany}

\author{Claudia Linnhoff-Popien}
\affiliation{LMU Munich, Department for Computer Science, Munich, Germany}

\author{Sebastian Feld~\orcidlink{0000-0003-2782-1469}}
\email{s.feld@tudelft.nl}
\thanks{Corresponding author}
\affiliation{Delft University of Technology, Delft, The Netherlands}

\date{\today}

\begin{abstract}
The Quantum Approximate Optimization Algorithm (QAOA) is a leading hybrid heuristic for combinatorial optimization, but efficiently handling hard constraints remains a significant challenge. XY-mixers successfully confine quantum state evolution to a feasible subspace, such as the Hamming-weight-1 sector for one-hot constraints. On the contrary, warm-starting biases the search toward promising regions based on preliminary solutions. Combining these two techniques requires maintaining the essential alignment between the initial state and the mixer Hamiltonian to preserve convergence guarantees. Previous work demonstrated warm-starting with XY-mixers via a biased initial state, but relying only on standard mixer Hamiltonians. Consequently, the initial state is no longer a ground state of the mixer. In this work, we overcome these limitations by formulating a warm-started XY-mixer Hamiltonian for one-hot constraints and proving its ground-state properties. Furthermore, we provide a shallow circuit implementation suitable for NISQ implementations. We embed the warm-starting into a classical heuristic that iteratively updates the bias based on previous samples, called Iterative Warm-Starting (IWS). Extensive numerical simulations on Max-$k$-Cut and Traveling Salesperson Problem instances demonstrate that IWS-QAOA significantly accelerates the solution-finding process, increasing the probability of sampling optimal solutions by orders of magnitude compared to standard XY-QAOA. Finally, we validate our approach on the \texttt{ibm\_boston} QPU using hardware-tailored 144-qubit problem instances. By coupling IWS-QAOA with a greedy steepest-descent post-processing strategy to repair infeasible measurements caused by hardware noise, we successfully identify optimal solutions on actual quantum devices.
\end{abstract}

\maketitle

\section{Introduction}\label{sec:into}
The field of Quantum Computing (QC) has advanced significantly in recent years, transitioning from purely theoretical proposals~\cite{feynman1982} to tangible, Noisy Intermediate-Scale Quantum (NISQ) devices~\cite{preskill2018} and even to the first successful implementations of error-correction codes~\cite{acharya2025}. As hardware capabilities steadily improve and the scale of qubits with practically relevant fidelities grows~\cite{abughanem2025}, QC is becoming an increasingly more important option for complex computational challenges that stretch the limits of classical methods. Among the most promising real-world applications is combinatorial optimization, where quantum heuristics can explore vast solution spaces more efficiently and uncover high-quality solutions across domains such as logistics and energy~\cite{abbas2024, feld2019, krellner2025, blenninger2024a}.

The Quantum Approximate Optimization Algorithm (QAOA)~\cite{farhi2014a} has emerged as a prominent quantum heuristic for solving Combinatorial Optimization Problems (COPs), inspired by the adiabatic theorem~\cite{born1928}. Originally designed for Unconstrained Binary Optimization (UBO) problems, such as the well-known Max-Cut problem, QAOA alternates between the time evolution of a problem-specific cost Hamiltonian and a driver or mixer Hamiltonian. However, real-world combinatorial optimization problems frequently involve strict constraints. The standard approach of adding penalty terms to the objective function converts the constrained problem into an unconstrained one, but at the cost of significantly increasing the complexity of the search space.~\cite{lucas2014, glover2022}.

To efficiently address constrained optimization, recent research has focused on constraint-preserving mixers~\cite{hadfield2019, fuchs2022}. Most prominently, XY-mixers~\cite{wang2020} confine the quantum state evolution strictly to a feasible subspace, such as a specific Hamming-weight sector. This is particularly advantageous for implementing one-hot constraints (Hamming weight 1) without expanding the search space. Concurrently, \emph{warm-starting} techniques have been developed to bias the algorithm’s initial state toward promising regions, thereby accelerating convergence~\cite{egger2021, yu2022}. While warm-starting has been successfully applied to standard QAOA via relaxed classical solutions or adaptive bias fields, combining warm-starting with XY-mixers has so far only been demonstrated in a limited way: Ref.~\cite{carmo2025} proposed warm-started initial states, but relied on default XY-mixer Hamiltonians. Consequently, the initial state is no longer a ground state of the mixer Hamiltonian. Yet this property identifies QAOA with the adiabatic theorem, and breaking it undermines QAOA's performance guarantees for infinite depth $p\rightarrow \infty$.

In this work, we overcome these limitations by introducing a fully warm-started XY-mixer. Our contributions are multifold: first, we formulate the warm-started XY-mixer Hamiltonian and analytically prove its unique ground-state properties within the Hamming-weight 1 subspace. Second, we provide a shallow circuit implementation using two two-qubit Pauli rotations. Third, we apply the Iterative Warm Start (IWS) to QAOA, which avoids reliance on problem-specific classical solvers by iteratively updating a probability distribution using samples from previous iterations, akin to an adaptive bias field~\cite{yu2022, lopez-ruiz2025}, and test this scheme on an IBM Heron r3 Quantum Processing Unit QPU.

We benchmark the efficacy of our approach through extensive numerical simulations on the Max-$k$-Cut and Traveling Salesperson Problem. Our results show that IWS-QAOA significantly accelerates the solution-finding process, achieving higher approximation ratios and probabilities of sampling optimal solutions with fewer quantum resources. Finally, to validate our method on actual quantum hardware, we evaluate 144-qubit hardware-tailored, shallow-depth problem instances on the \texttt{ibm\_boston} QPU, using a greedy steepest-descent post-processing strategy to successfully mitigate constraint violations induced by hardware noise. We successfully identified optimal solutions in three of the five tested instances and solutions with an approximation ratio above 99\% for the other two. This marks one of the first large-scale applications of XY-mixers, embedding our work into the research on utility-scale quantum optimization~\cite{pelofske2023, mohseni2025, mohseni2026, romero2025}.

The remainder of this paper is structured as follows. Sec.~\ref{sec:backgroud} introduces the core concepts: QAOA, XY-Mixers, and warm-starting. In Sec.~\ref{sec:methods}, we propose our warm-started XY-mixer modification and present the IWS-QAOA algorithm. Sec.~\ref{sec:eval} presents the numerical simulation experiments and Sec.~\ref{sec:hw} showcases the results gathered from the NISQ QPU experiments. Finally, we conclude our findings in Sec.~\ref{sec:conclusion}.
\section{Background \& Related Work}\label{sec:backgroud}

\subsection{Quantum Approximate Optimization Algorithm}

The Quantum Approximate Optimization Algorithm (QAOA) was originally introduced as an approximation algorithm for the Max-Cut problem~\cite{farhi2014a}. Its scope was subsequently extended to a hybrid heuristic for general Ising-like spin-glass problems~\cite{wecker2016}, which are isomorphic to the Unconstrained Binary Optimization (UBO) problem class. In the quadratic case, this corresponds to the well-known QUBO format~\cite{lucas2014}.

QAOA can be interpreted as a parametrized $p$-step Trotterization of the adiabatic algorithm~\cite{blekos2024}. The process initializes in the equal superposition state $\ket{+}^{\otimes n}$, which is the ground state of the transverse-field mixer Hamiltonian $H_M = -\sum_j X_j$, where $j$ denotes the qubit index. Following the algorithm, we subsequently apply the time evolution of the cost Hamiltonian $H_C = \sum_x C(x) \ket{x}\bra{x}$, defined by the cost function $C:\{0,1\}^n\rightarrow \mathbb{R}$, and the mixer Hamiltonian in alternating order:
\begin{align}
    \ket{\vec\beta, \vec\gamma} = \prod_{i=0}^{p-1} e^{-i\beta_i H_M} e^{-i\gamma_i H_C} \ket{+}^{\otimes n}.
\end{align}
The variational parameters $\vec\beta$ and $\vec\gamma$ are optimized using a classical optimizer to minimize the quantum expectation value $C_Q: \mathbb{R}^{p} \times \mathbb{R}^p\rightarrow \mathbb{R}$, being now a continuous function:
\begin{align}
    C_Q(\vec\beta, \vec\gamma) = \bra{\vec\beta, \vec\gamma} H_C \ket{\vec\beta, \vec\gamma}.
\end{align}
Consequently, the probability of sampling an optimal solution $x_\text{opt}$, defined as $P_\text{opt} = \sum_{x_\text{opt}}|\braket{x_\text{opt}|\vec\beta,\vec\gamma}|^2$, where we sum over all degenerate optimal solutions, is expected to be significantly improved compared to random sampling ($P_\text{opt} \gg 2^{-n}$).

Reflecting the connection between QAOA and adiabatic evolution, recent studies have shown that a linear schedule---characterized by decreasing $\beta_i$ and increasing $\gamma_i$ values---performs remarkably well. This is particularly true for deeper circuits (large $p$), where linear parameters can circumvent the overhead of high-dimensional classical optimization~\cite{montanez-barrera2025, dehn2026}.

\subsection{Warm-Starting and Adaptive Bias}\label{sec:bg-ws}

Warm-starting is based on the idea of using classical computing to bias the quantum algorithm's search toward promising regions of the solution space. Ref.~\cite{egger2021} introduced warm-starting by biasing QAOA based on an input probability distribution $q_i$. This distribution is obtained either by solving a relaxed version of a binary optimization problem (i.e., by replacing binary variables with continuous ones) or by using a binary solution from a Semidefinite Programming (SDP) approximate algorithm, clipped to the interval $[\epsilon, 1-\epsilon]$ for $\epsilon > 0$. Consequently, the initial state for a single qubit $i$ is no longer an equal superposition but a biased rotation:
\begin{align}
    R_Y(\alpha_i)\ket{0} = \sqrt{q_i}\ket{0} + \sqrt{1-q_i} \ket{1},
\end{align}
where $\alpha_i = 2\arccos(\sqrt{q_i})$. The initial state for multiple qubits is given by the product of single-qubit states. To recover the ground-state property required by the adiabatic theorem, the mixer for a single qubit is aligned with the initial state~\cite{egger2021}:
\begin{align}
    \begin{aligned}
        H_M^{\text{WS}}(q_i) &= -X \sin \alpha_i - Z \cos \alpha_i \\
        &= \begin{pmatrix} 1 - 2q_i & -2\sqrt{q_i(1-q_i)} \\ -2\sqrt{q_i(1-q_i)} & 2q_i - 1 \end{pmatrix}. \label{eq:x-wsmixer}
    \end{aligned}
\end{align}
Following Ref.~\cite{egger2021}, similar warm-starting methods based on other classical heuristics and relaxations have emerged for Max-Cut-like optimization problems~\cite{tate2023, tate2023a}. Recently, warm-starting based on the solution to a regularized relaxed cost function has been proposed~\cite{he2026}.

Alternatively, Ref.~\cite{yu2022} introduced the adaptive bias QAOA, where a bias field $h_i$ is added to the mixer:
\begin{align}
    H_M^{\text{bf}}(h_i) = -X - h_i Z = \frac{1}{\sin\alpha_i} H^\text{WS}_M(q_i) \label{eq:x-bfmixer},
\end{align}
which is a scaled version of Eq.~\eqref{eq:x-wsmixer} with $h_i = 1 / \tan\alpha_i$. While both formulations are equivalent, adaptive bias differs by relying on an iterative update of the bias field based on samples from previous iterations, $h_i^{(t+1)} \gets h_{i}^{(t)}+\eta \langle Z_i\rangle^{(t)}$, where $\eta$ denotes the learning rate, rather than a fixed classical solution. This method was later combined with warm-starting from a classical solution in Ref.~\cite{yu2025}. Similar iterative approaches have also been discussed in Refs.~\cite{yuan2025, lopez-ruiz2025}.

The bias-field Digitized Counterdiabatic Quantum Optimization (bf-DCQO)~\cite{cadavid2025}, a QAOA variant derived via counterdiabatic driving with an adaptive bias field, follows a similar principle. The update rule for the adaptive bias field was further improved by evaluating $\langle Z_i \rangle$ with a conditioned value-at-risk (CVaR) metric~\cite{romero2025}. Notably, optimization with bf-DCQO has been successfully demonstrated on NISQ devices for large-scale optimization problems spanning 156 qubits~\cite{cadavid2025, romero2025}.

\subsection{XY-Mixers}

Standard QAOA is primarily designed for UBO problems, yet real-world combinatorial optimization problems (COPs) frequently involve hard constraints. Consequently, constraint-handling methods are essential to address a broader range of optimization problems. The conventional procedure involves adding penalty terms to the objective function, effectively transforming the COP into an unconstrained problem~\cite{glover2022, lucas2014}. However, this approach introduces significant drawbacks, such as increased search space complexity and an expansion of the total search space~\cite{bucher2025a}, even though constraints should ideally narrow it.

Since the mixer Hamiltonian in QAOA is responsible for exploring the search space, it is possible to construct mixers that restrict the quantum state evolution to a specific subspace of the entire Hilbert space, provided the initial state is prepared within that subspace~\cite{hadfield2019, fuchs2022}.

Most prominently, XY-mixers preserve the Hamming weight~\cite{wang2020} of a quantum state, making them ideal for implementing Hamming-weight constraints such as the one-hot constraint (Hamming weight 1), which is abundantly used in optimization modeling~\cite{gleixner2021}. For $k$ qubits encoding $k$ binary variables under a one-hot constraint, the fully-connected XY-mixer is defined as:
\begin{align}
\begin{aligned}
\mathcal{H} &= -\frac{1}{2(k-1)} \sum_{i > j}  (X_i X_j + Y_i Y_j) \\
&= -\frac{1}{k-1}\sum_{i > j} \mathrm{blockdiag}(0, X, 0)_{ij}, \label{eq:xy-mixer}
\end{aligned}
\end{align}
where the $(X_i X_j + Y_i Y_j)/2=\mathrm{blockdiag}(0, X, 0)$ block effectively corresponds to an ordinary $X$-mixer acting within the $\mathrm{span} \{\ket{01}, \ket{10}\}$ subspace. Beyond the fully-connected version, more hardware-efficient topologies have been proposed, such as the XY-ring-mixer, which restricts interactions to neighboring qubits~\cite{hadfield2019, wang2020}. For applying XY-mixers, the equal superposition initial state is replaced by $\ket{W}$ states, representing an equal superposition of all feasible one-hot solutions:
\begin{align}
    \ket{W} = \frac{1}{\sqrt{k}}\sum_i\ket{e_i},
\end{align}
where $\ket{e_i} = \ket{0\cdots010\cdots0}$ denotes the $i$-th basis state with the excitation at the $i$-th position. For problems involving multiple one-hot constraints, the algorithm employs a product of $\ket{W}$ states as the initial state, with corresponding XY-mixers for each constraint. Notably, XY-mixers cannot enforce constraints with shared binary variables.

While XY-mixers can be generalized to preserve other Hamming weights, our work focuses specifically on the Hamming-weight 1 case. Recent studies have demonstrated that XY-mixers overcome the fundamental limitations of penalty-based QAOA for one-hot-constrained problems, drastically reducing runtime~\cite{onah2025}. Consequently, they represent the tool of choice for integrating one-hot constraints into the QAOA framework.

\subsection{Warm-Starting and XY-Mixers}

Since the initial state of a QAOA utilizing XY-mixers must remain within a specific Hamming-weight sector, the single-qubit warm-starting scheme ~\cite{egger2021} described in Sec.~\ref{sec:bg-ws} cannot be applied directly. To address this, \citeauthor{carmo2025}~\cite{carmo2025} proposed a warm-started version of the $\ket{W}$-state:
\begin{align}
    \ket{W_P} = \sum_{i=1}^k \sqrt{P_i} \ket{e_i}, \label{eq:wp-state}
\end{align}
where the $P_i$ is the probability distribution over $k$ one-hot states that satisfies $\sum_i P_i = 1$~\cite{carmo2025} and $P_i > 0$. The $P_i = 0$ case can be covered by omitting the state $i$ entirely. $\Wp$-states can be implemented with linear circuit depth $O(k)$ on sparsely connected hardware and with $O(\log k)$ on fully-connected one. Details of the implementation are given in Appendix~\ref{app:wstate}. Similar to the approach in Ref.~\cite{egger2021}, the authors in Ref.~\cite{carmo2025} employ a rounded solution from an SDP relaxation, obtained from a penalized QUBO, to determine the initial probabilities $P$ and demonstrate performance improvements over standard XY-QAOA. However, a significant limitation remains: their approach does not modify the XY-mixer Hamiltonian, instead using the default, non-warm-started version~\cite{carmo2025}. This leaves the initial state misaligned with the mixer's ground state, a challenge we address in the following sections.

Finally, Ref.~\cite{kordonowy2026} approaches warm-starting with XY-mixers differently than warm-starting discussed so far. They first optimize a restricted subset of Lie-algebra generators in a multi-angle QAOA ansatz and then transfer these parameters to the full ansatz for fine-tuning. While they demonstrate promising results, their method differs fundamentally from warm-starting through biasing the search space via initial probabilities, as investigated in our work.
\section{Methods}\label{sec:methods}

It has been shown that QAOA performance is best when the initial state corresponds to the ground state of the mixer Hamiltonian~\cite{he2023}. Furthermore, aligning the initial state with the mixer facilitates the use of parameterizations inspired by the adiabatic algorithm, such as linear ramps of decreasing $\beta$ and increasing $\gamma$~\cite{egger2021}. While previous research on warm-starting QAOA with XY-mixers proposed a biased initial state, it did not provide a corresponding adapted mixer operator~\cite{carmo2025}. 

In this section, we first demonstrate how to warm-start the XY-mixer to maintain this necessary alignment in Sec.~\ref{sec:ws-mixers}. Next, we derive a shallow circuit implementation of the proposed mixer in Sec.~\ref{sec:circuit-impl}. Finally, Sec.~\ref{sec:iws-alg} describes the Iterative Warm Start (IWS) hybrid algorithm, a method inspired by adaptive-bias QAOA techniques~\cite{yu2025}.

\subsection{Warm-starting XY-Mixers}\label{sec:ws-mixers}

First, we show that $\Wp$ is \emph{not} a ground state of the (fully-connected) XY-mixer
Hamiltonian $\mathcal{H}$ defined in Eq.~\eqref{eq:xy-mixer}:
\begin{align}
\mathcal{H}\Wp &= -\frac{1}{k-1}\sum_i \sqrt{P_i} \sum_{j\neq i} \ket{e_j} \\
&=- \sum_i \frac{\sum_{j \neq i} \sqrt{P_j}}{k-1} \ket{e_i}.
\end{align}
This expression only reduces to the eigenvalue equation $\mathcal{H}\Wp = -\Wp$ if $P_i = 1/k \,\forall i$, which corresponds to the standard equal-superposition $\ket{W}$-state.

Following the logic of the warm-started X-mixer in~\eqref{eq:x-wsmixer}, we define the warm-started XY-mixer as
\begin{align}
   \mathcal{H}_{ij}(q) = \blockdiag(0, H^\text{WS}_M(q), 0)_{ij},
\end{align}
where the single-qubit warm-started mixer is embedded into the $\mathrm{span}\{\ket{01}, \ket{10}\}$ subspace of qubits $i$ and $j$. To distinguish these operators, we use calligraphic $\mathcal{H}$ for XY-type mixers and $H$ for standard X-mixers. From the definition of $H^\text{WS}_M$, it follows that the ground state of $\mathcal{H}_{ij}(q)$ is $\sqrt{q}\ket{01} + \sqrt{1-q}\ket{10}$.

\begin{theorem}\label{th:Hp-Wp-groundstate} The warm-started $k$-qubit $\Wp$-state with $P_i > 0$ is the unique ground state of
\begin{align}
    \mathcal{H}_P=\frac{1}{k-1}\sum_{i = 1}^{k}\sum_{j > i}^{k} \mathcal{H}_{ij}\left(q_{ij}\right),\,\quad q_{ij} = \frac{P_i}{P_i + P_j}
\end{align}
within the Hamming-weight $1$ subspace, with a corresponding energy of $-1$.
\end{theorem}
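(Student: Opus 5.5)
Our approach is to write each two-qubit term in the basis $\{\ket{e_i}\}$ of the Hamming-weight-1 subspace and recognise the result as a sum of rank-one projectors, which makes both the eigenvalue and the uniqueness claims transparent.

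\textbf{Matrix form on the subspace.} Restricted to $\mathrm{span}\{\ket{e_1},\dots,\ket{e_k}\}$, the operator $\mathcal{H}_{ij}(q_{ij})$ acts nontrivially only on $\mathrm{span}\{\ket{e_i},\ket{e_j}\}$, where it equals $H^\text{WS}_M(q_{ij})$ from Eq.~\eqref{eq:x-wsmixer}. On the remaining $\ket{e_l}$, $l\neq i,j$, qubits $i,j$ are in $\ket{00}$, so by the $\blockdiag(0,\cdot,0)$ structure the action is zero.

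Now use $q_{ij}=P_i/(P_i+P_j)$ and $1-q_{ij}=P_j/(P_i+P_j)$. The $2\times 2$ block has the spectral decomposition
\begin{align}
H^\text{WS}_M(q) = \mathbb{1} - 2\ket{\phi_q}\bra{\phi_q}, \qquad \ket{\phi_q}=\sqrt{q}\ket{0}+\sqrt{1-q}\ket{1},
\end{align}
which one checks directly against Eq.~\eqref{eq:x-wsmixer}. Hence, on the subspace,
\begin{align}
\mathcal{H}_{ij} = \Pi_{ij} - \frac{2}{P_i+P_j}\ket{v_{ij}}\bra{v_{ij}}, \qquad \ket{v_{ij}}=\sqrt{P_i}\ket{e_i}+\sqrt{P_j}\ket{e_j},
\end{align}
where $\Pi_{ij}=\ket{e_i}\bra{e_i}+\ket{e_j}\bra{e_j}$. (The ordering of $\ket{01},\ket{10}$ versus $e_i,e_j$ should match the ground state stated before the proposition.)

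\textbf{Eigenvalue of $\Wp$.} Summing over pairs gives $\sum_{i<j}\Pi_{ij}=(k-1)\mathbb{1}$. Therefore
\begin{align}
\mathcal{H}_P = \mathbb{1} - \frac{2}{k-1}\sum_{i<j}\frac{\ket{v_{ij}}\bra{v_{ij}}}{P_i+P_j}.
\end{align}
Apply this to $\Wp$. Since $\braket{v_{ij}|W_P}=P_i+P_j$, each projector term contributes exactly $\ket{v_{ij}}$. Summing, $\sum_{i<j}\ket{v_{ij}}=(k-1)\sum_i\sqrt{P_i}\ket{e_i}=(k-1)\Wp$. This yields $\mathcal{H}_P\Wp=(1-2)\Wp=-\Wp$.

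\textbf{Ground state and uniqueness.} Each pair term $\mathcal{H}_{ij}$ has eigenvalues $\pm1$ on its block and $0$ elsewhere, so $\mathcal{H}_{ij}\succeq -\Pi_{ij}$. Summing gives $\mathcal{H}_P\succeq -\frac{1}{k-1}\sum_{i<j}\Pi_{ij}=-\mathbb{1}$, so $-1$ is the minimal eigenvalue.

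For uniqueness, suppose $\mathcal{H}_P\ket\psi=-\ket\psi$ with $\ket\psi=\sum_l c_l\ket{e_l}$. Then the sum of nonnegative quantities
\begin{align}
\sum_{i<j}\bra\psi(\mathcal{H}_{ij}+\Pi_{ij})\ket\psi
\end{align}
vanishes, so every term vanishes. This forces the restriction of $\ket\psi$ to each block $\{e_i,e_j\}$ to be parallel to $\ket{v_{ij}}$, i.e. $c_i\sqrt{P_j}=c_j\sqrt{P_i}$ for all pairs. Because $P_i>0$, this gives $c_l\propto\sqrt{P_l}$ for all $l$, hence $\ket\psi\propto\Wp$.

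\textbf{Main obstacle.} The only delicate point is the uniqueness step. It requires that the equality case of the operator inequality $\mathcal{H}_{ij}\succeq-\Pi_{ij}$ pins down $\ket\psi$ on every block, and that positivity of all $P_i$ connects the pairwise ratio conditions into a single global one. The latter holds because the pair graph is complete and no $P_i$ vanishes. Everything else is direct bookkeeping.
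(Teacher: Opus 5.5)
Your proof is correct. The eigenvalue check is essentially the paper's computation, reorganized, but for minimality and uniqueness you take a different route. The paper verifies $\mathcal{H}_P\Wp=-\Wp$ by expanding $\mathcal{H}_P\ket{e_i}$ entrywise and relabeling the double sum. It proves invariance of the Hamming-weight-1 sector through the commutator with the number operator. It then obtains minimality and uniqueness from Perron--Frobenius: every off-diagonal entry $\bra{e_i}\mathcal{H}_P\ket{e_j}=-2\sqrt{P_iP_j}/[(k-1)(P_i+P_j)]$ is negative, so the nondegenerate ground state is the unique entrywise-positive eigenvector, which $\Wp$ is.

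You instead write each restricted pair term as $\Pi_{ij}-2\ket{\phi_{ij}}\bra{\phi_{ij}}$. This reduces the eigenvalue check to $\braket{v_{ij}|W_P}=P_i+P_j$ and $\sum_{i<j}\Pi_{ij}=(k-1)I$. You then replace Perron--Frobenius by a frustration-free argument: $\mathcal{H}_{ij}+\Pi_{ij}\succeq 0$ gives the explicit bound $\mathcal{H}_P\succeq -I$, and the equality case forces $c_i\sqrt{P_j}=c_j\sqrt{P_i}$ on every pair. The sector invariance that the paper proves separately is already contained in your block description.

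The paper's route needs only the sign pattern of the off-diagonals and one positive eigenvector, at the price of citing an external theorem. Yours is self-contained and explains why the ground energy is exactly $-1$: $\Wp$ simultaneously minimizes every pair term. That is also the property the paper later invokes for the individual Trotter layers.

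Your approach also shows exactly where connectivity enters, and it extends directly to the corollaries. For a general topology, $\Delta(G)(\mathcal{H}_P^G+I)=\sum_{(i,j)\in E}(\mathcal{H}_{ij}+\Pi_{ij})$, precisely because of the diagonal correction in Eq.~\eqref{eq:xy-ws-mixer-2}. Chaining the ratio conditions along edges then gives uniqueness for connected $G$. Otherwise the ground space has dimension equal to the number of connected components, isolated vertices included. This covers Corollaries~\ref{cor:1}, \ref{cor:app} and~\ref{cor:2} at once.

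Your caveat about matching $\ket{01},\ket{10}$ to $\ket{e_i},\ket{e_j}$ is well placed. The paper's proof implicitly uses the same identification, placing the diagonal entry $(P_j-P_i)/(P_i+P_j)$ on $\ket{e_i}$.
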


\begin{proof}
The proof proceeds in three steps: first, we demonstrate that $\ket{W_P}$ is an eigenstate of $\mathcal{H}_P$ with eigenvalue $-1$; second, we show that $\mathcal{H}_P$ leaves the Hamming-weight $1$ subspace invariant; and finally, we apply the Perron-Frobenius theorem to establish $\ket{W_P}$ as the unique ground state.

From the definition of $H^\text{WS}_M(q_{ij})$, we have:
\begin{align}
    H^\text{WS}_M(q_{ij}) = \frac{1}{P_i + P_j} \begin{pmatrix}
        P_j - P_i & -2 \sqrt{P_i P_j} \\ 
        -2 \sqrt{P_i P_j} & P_i - P_j
    \end{pmatrix},
\end{align}
which implies:
\begin{align*}
    \mathcal{H}_P \ket{e_i} = \frac{1}{k-1} \sum_{j \neq i} \left[ \frac{P_j - P_i}{P_i + P_j} \ket{e_i} - \frac{2 \sqrt{P_i P_j}}{P_i + P_j} \ket{e_j} \right].
\end{align*}
Expanding $\mathcal{H}_P \ket{W_P}$ using the definition of $\ket{W_P} = \sum_i \sqrt{P_i} \ket{e_i}$ yields:
\begin{align}
& \mathcal{H}_P \ket{W_P} \nonumber \\ &= \frac{1}{k-1} \sum_i \sum_{j \neq i} \left[ \sqrt{P_i} \frac{P_j - P_i}{P_i + P_j} \ket{e_i} - \frac{2 P_i \sqrt{P_j}}{P_i + P_j} \ket{e_j} \right] \nonumber \\ 
&= -\frac{1}{k-1} \sum_i \sqrt{P_i} (k - 1) \ket{e_i} = -\ket{W_P}.\label{eq:ws-proof-id}
\end{align}

To obtain the second equality in Eq.~\eqref{eq:ws-proof-id}, we first separated the two contributions in the double sum. In the second contribution, we swapped the labels of the summation indices $i$ and $j$, which turns the basis vector $\ket{e_j}$ into $\ket{e_i}$ and replaces the prefactor $P_i\sqrt{P_j}$ by $P_j\sqrt{P_i}$. Since the condition $j \neq i$ and the denominator $P_i+P_j$ are invariant under this relabeling, the summation domain remains unchanged. Both contributions can therefore be combined coefficient-wise for each basis state $\ket{e_i}$. The resulting numerator is $P_j-P_i-2P_j = -(P_i+P_j)$, so each summand reduces to $-\sqrt{P_i}\ket{e_i}$. Finally, for every fixed $i$, the inner sum contains exactly $k-1$ terms, which cancels the prefactor $1/(k-1)$ and yields $\mathcal{H}_P\ket{W_P}=-\ket{W_P}$. This confirms that $\ket{W_P}$ is an eigenstate with energy $-1$.

Next, we verify that $\mathcal{H}_P$ preserves the Hamming weight by showing that the commutator $[\mathcal{H}_{ij}(q), (I-Z)_i + (I-Z)_j] = 0$. Specifically:
\begin{align*}
    [\mathcal{H}_{ij}(q), (I-Z)_i] &= 4 \sqrt{q(1-q)}(\ket{10}\bra{01}- \ket{01}\bra{10})\\
    &= -[\mathcal{H}_{ij}(q), (I-Z)_j]. 
\end{align*}
Because the individual commutators cancel, it follows that $[\mathcal{H}_P, \mathcal{N}] = 0$, where $\mathcal{N} = \sum_i (I-Z)_i / 2$ is the number operator. Thus, any quantum state initialized in the Hamming-weight $1$ subspace remains within this feasible subspace under evolution of $\mathcal{H}_P$.

Finally, we observe that the off-diagonal matrix elements are negative for any two basis states, i.e., $\bra{e_i} \mathcal{H}_P \ket{e_j} < 0$ for all $i \neq j$. According to the Perron-Frobenius theorem~\cite{tasaki2020}, the unique ground state of such a matrix is the eigenvector whose entries are all real and positive. Since $\sqrt{P_i} > 0$ for all $i$, $\ket{W_P}$ satisfies this condition and is therefore the unique ground state in the Hamming-weight $1$ sector.
\end{proof}

\begin{corollary}\label{cor:1}
Let $G(V,E)$ be a mixer topology with $|V| = k$. If $G$ is connected and regular, then $\Wp$ with $P_i > 0$ is the unique ground state of
\begin{align}\label{eq:xy-ws-mixer-1}
    \mathcal{H}_P^G= \frac{1}{\Delta(G)}  \sum_{i,j\in E} \mathcal{H}_{ij}(q_{ij}),
\end{align}
within the Hamming-weight 1 subspace, with energy $-1$. 
\end{corollary}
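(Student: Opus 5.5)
The plan is to follow the three-step structure of the proof of Proposition~\ref{th:Hp-Wp-groundstate}, replacing the complete graph by $G$. Each step has to be re-examined to see where connectivity and regularity enter.

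\emph{Step 1: eigenvalue equation.} Acting on a basis state gives
\begin{align*}
\mathcal{H}_P^G\ket{e_i} = \frac{1}{\Delta(G)}\sum_{j\in N(i)}\left[\frac{P_j-P_i}{P_i+P_j}\ket{e_i}-\frac{2\sqrt{P_iP_j}}{P_i+P_j}\ket{e_j}\right],
\end{align*}
where $N(i)$ is the neighbourhood of $i$ in $G$. I then expand $\mathcal{H}_P^G\Wp$. In the second contribution I swap $i\leftrightarrow j$, which is legitimate because $E$ is symmetric: $j\in N(i)$ if and only if $i\in N(j)$. Combining the two contributions coefficient-wise, exactly as in Eq.~\eqref{eq:ws-proof-id}, each edge term reduces to $-\sqrt{P_i}\ket{e_i}$. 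The coefficient of $\ket{e_i}$ is therefore $-\deg(i)\sqrt{P_i}/\Delta(G)$.

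This is where regularity is used. Since $\deg(i)=\Delta(G)$ for all $i$, the coefficient equals $-\sqrt{P_i}$, so $\mathcal{H}_P^G\Wp=-\Wp$. Without regularity the state would not be an eigenvector at all.

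\emph{Step 2: invariance.} Each summand $\mathcal{H}_{ij}(q_{ij})$ commutes with $\mathcal{N}$ by the commutator computation already given in the proof of Proposition~\ref{th:Hp-Wp-groundstate}. Hence $\mathcal{H}_P^G$ preserves the Hamming-weight-$1$ sector, and we may restrict attention to the $k\times k$ matrix $M_{ij}=\bra{e_i}\mathcal{H}_P^G\ket{e_j}$.

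\emph{Step 3: uniqueness.} I expect this to be the main obstacle, because the off-diagonal entries are no longer all strictly negative. We only have $M_{ij}=-2\sqrt{P_iP_j}/(\Delta(G)(P_i+P_j))<0$ when $\{i,j\}\in E$, and $M_{ij}=0$ otherwise.

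I would use the irreducible version of Perron--Frobenius. Consider $A=cI-M$ with $c$ large enough that $A$ is entrywise nonnegative. The off-diagonal support of $A$ is exactly the adjacency pattern of $G$, so $A$ is irreducible precisely because $G$ is connected. Perron--Frobenius then says:
\begin{itemize}
\item the largest eigenvalue of $A$, equivalently the smallest eigenvalue of $M$, is simple;
\item it has a strictly positive eigenvector;
\item it is the only eigenvalue admitting a nonnegative eigenvector.
\end{itemize}
Since $\Wp$ has strictly positive entries $\sqrt{P_i}>0$ and is an eigenvector with eigenvalue $-1$, the last property forces $-1$ to be the ground energy. The first property then makes $\Wp$ the unique ground state in the sector.

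A brief remark could note that connectivity is necessary. For disconnected $G$, the component-wise restrictions of $\Wp$ are degenerate eigenvectors, so uniqueness fails.
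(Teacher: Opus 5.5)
Your proposal is correct and follows the paper's proof essentially step for step. The steps are Hamming-weight invariance via the commutator, then the neighbourhood-restricted eigenvalue computation in which regularity collapses the coefficient $-\deg(i)\sqrt{P_i}/\Delta(G)$ to $-\sqrt{P_i}$, then Perron--Frobenius with irreducibility supplied by connectivity. Your use of the shifted matrix $cI-M$ and of the fact that only the Perron eigenvalue admits a nonnegative eigenvector states precisely the uniqueness step that the paper phrases more loosely.
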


Here, $\Delta(G) = \max\{\mathrm{deg}(v): v \in V\}$ denotes the maximum degree of the mixer topology. In Proposition~\ref{th:Hp-Wp-groundstate}, $G$ is a fully connected graph $K_k$ with $\Delta(K_k)=k-1$.

\begin{corollary}\label{cor:app}
Let $G(V,E)$ be a mixer topology with $|V| = k$. If $G$ is connected, then $\Wp$ with $P_i > 0$ is the unique ground state of
\begin{multline}\label{eq:xy-ws-mixer-2}
    \mathcal{H}_P^{G}= \frac{1}{\Delta(G)}\Bigg[\sum_{i,j\in E} \mathcal{H}_{ij}(q_{ij}) \\+  \sum_{i} [\deg(i)-\Delta(G)]\ket{e_i}\bra{e_i}\Bigg]
\end{multline}
within the Hamming-weight 1 subspace, with energy $-1$.
\end{corollary}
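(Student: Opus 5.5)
We prove Corollary~\ref{cor:app} by reusing the three-step structure of the proof of Proposition~\ref{th:Hp-Wp-groundstate}: we first show that $\Wp$ is an eigenvector with eigenvalue $-1$, then that the Hamming weight is conserved, and finally we apply Perron--Frobenius. The only new ingredient is the diagonal correction $\sum_i[\deg(i)-\Delta(G)]\ket{e_i}\bra{e_i}$. Its job is to compensate for vertices whose degree falls short of $\Delta(G)$, which would otherwise spoil the eigenvalue.

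\textbf{Step 1 (eigenvector).} We restrict to the Hamming-weight $1$ sector. Using the matrix form of $H^\text{WS}_M(q_{ij})$ from the proof of Proposition~\ref{th:Hp-Wp-groundstate}, the edge sum acts as
\begin{align*}
\sum_{\{i,j\}\in E}\mathcal{H}_{ij}\ket{e_i} = \sum_{j\in N(i)}\left[\frac{P_j-P_i}{P_i+P_j}\ket{e_i}-\frac{2\sqrt{P_iP_j}}{P_i+P_j}\ket{e_j}\right],
\end{align*}
where $N(i)$ denotes the neighbourhood of $i$ in $G$. We apply this to $\Wp=\sum_i\sqrt{P_i}\ket{e_i}$ and swap the labels $i\leftrightarrow j$ in the second contribution. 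This relabeling is legitimate because adjacency in $G$ is symmetric, so the summation domain over ordered pairs of adjacent vertices is unchanged. The two contributions then combine coefficient-wise exactly as before, with numerator $P_j-P_i-2P_j=-(P_i+P_j)$. The coefficient of $\ket{e_i}$ is therefore $-\deg(i)\sqrt{P_i}$ instead of $-(k-1)\sqrt{P_i}$.

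Adding the diagonal term contributes $[\deg(i)-\Delta(G)]\sqrt{P_i}$ to the same coefficient, so the total is $-\Delta(G)\sqrt{P_i}$. Dividing by $\Delta(G)$ gives $\mathcal{H}_P^G\Wp=-\Wp$.

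\textbf{Step 2 (Hamming weight).} Each $\mathcal{H}_{ij}$ commutes with $\mathcal{N}$ by the commutator computation already given in the proof of Proposition~\ref{th:Hp-Wp-groundstate}. The extra term $\ket{e_i}\bra{e_i}$ is diagonal in the computational basis, so it also commutes with $\mathcal{N}$. Hence $\mathcal{H}_P^G$ leaves the Hamming-weight $1$ subspace invariant.

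\textbf{Step 3 (uniqueness).} This is the step where the hypotheses on $G$ actually enter, and it is the main point to get right. Unlike the fully connected case, the off-diagonal elements are no longer all strictly negative. We have $\bra{e_i}\mathcal{H}_P^G\ket{e_j}=-2\sqrt{P_iP_j}/[\Delta(G)(P_i+P_j)]<0$ only when $\{i,j\}\in E$, and the element is $0$ otherwise. We therefore invoke the Perron--Frobenius theorem in its irreducible form for real symmetric matrices with nonpositive off-diagonal entries. Concretely, we apply it to $cI-\mathcal{H}_P^G$ for large $c$, which is a nonnegative matrix whose sparsity pattern is that of $G$ together with its diagonal. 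Irreducibility of this matrix is equivalent to connectivity of $G$. The theorem then guarantees that the ground state is nondegenerate and can be chosen with strictly positive entries.

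Since $\Wp$ has strictly positive entries $\sqrt{P_i}>0$ and is an eigenvector, it must be this Perron vector. The reason is that any other eigenvector is orthogonal to the positive Perron vector, so it cannot itself be entrywise positive. Hence $\Wp$ is the unique ground state with energy $-1$.

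Corollary~\ref{cor:1} is the special case in which $G$ is regular: there $\deg(i)=\Delta(G)$ for every $i$, so the diagonal correction vanishes.
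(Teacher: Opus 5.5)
Your proof is correct and follows essentially the same route as the paper. It checks the eigenvalue by showing that the neighbourhood sum produces $-\deg(i)\sqrt{P_i}$ and that the diagonal correction lifts this to $-\Delta(G)\sqrt{P_i}$, and it uses Hamming-weight conservation and Perron--Frobenius via connectivity of $G$. Your Step~3 is more careful than the paper's: you state the irreducibility condition explicitly and give the orthogonality argument showing that a strictly positive eigenvector must be the Perron vector, which the paper only asserts.
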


Note that Eqs.~\eqref{eq:xy-ws-mixer-1} and~\eqref{eq:xy-ws-mixer-2} are equivalent when $G$ is regular, since $\mathrm{deg}(v) = \Delta(G)\,\forall v\in V$.

\begin{corollary}\label{cor:2}
For any mixer topology $G(V,E)$ with $|V| = k$ and $E \neq \emptyset$, $\Wp$ with $P_i > 0$ is a ground state of the operator $\mathcal{H}_P^{G}$ defined in Eq.~\eqref{eq:xy-ws-mixer-2} within the Hamming-weight 1 subspace, with energy $-1$.
\end{corollary}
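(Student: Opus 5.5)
The plan is to recycle the eigenvector computation from the proof of Proposition~\ref{th:Hp-Wp-groundstate}, now done edge by edge, and then to replace the Perron--Frobenius uniqueness step by a variational lower bound. Uniqueness is deliberately not claimed in Corollary~\ref{cor:2}, since for a disconnected $G$ the ground space is degenerate.

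\textbf{Step 1: local decomposition on each edge.} For each edge $(i,j)\in E$, restrict $\mathcal{H}_{ij}(q_{ij})$ to the Hamming-weight-1 subspace. The part acting on $\mathrm{span}\{\ket{e_i},\ket{e_j}\}$ is the $2\times 2$ matrix $H^\text{WS}_M(q_{ij})$ computed in the proof of Proposition~\ref{th:Hp-Wp-groundstate}. On every $\ket{e_l}$ with $l\notin\{i,j\}$ it acts as $0$, because both qubits $i,j$ are then in $\ket{00}$. The matrix $H^\text{WS}_M(q_{ij})$ has eigenvalues $\pm1$, and its $-1$ eigenvector is proportional to $\sqrt{P_i}\ket{e_i}+\sqrt{P_j}\ket{e_j}$. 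Hence
\[
\mathcal{H}_{ij}(q_{ij})\big|_{\text{HW}1}=\Pi_{ij}-2\ket{v_{ij}}\bra{v_{ij}},\qquad \ket{v_{ij}}=\frac{\sqrt{P_i}\ket{e_i}+\sqrt{P_j}\ket{e_j}}{\sqrt{P_i+P_j}},
\]
where $\Pi_{ij}=\ket{e_i}\bra{e_i}+\ket{e_j}\bra{e_j}$.

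\textbf{Step 2: reassemble the operator.} Summing over the edges gives $\sum_{(i,j)\in E}\Pi_{ij}=\sum_i\deg(i)\ket{e_i}\bra{e_i}$. Combined with the diagonal correction in Eq.~\eqref{eq:xy-ws-mixer-2}, this yields, within the Hamming-weight-1 subspace,
\[
\mathcal{H}_P^G=I-\frac{2}{\Delta(G)}\sum_{(i,j)\in E}\ket{v_{ij}}\bra{v_{ij}}.
\]
Since $E\neq\emptyset$, $\Delta(G)\ge1$ and the operator is well defined.

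\textbf{Step 3: eigenvalue $-1$.} We have $\braket{v_{ij}|W_P}=\sqrt{P_i+P_j}$, so
\[
\sum_{(i,j)\in E}\ket{v_{ij}}\braket{v_{ij}|W_P}=\sum_{(i,j)\in E}\bigl(\sqrt{P_i}\ket{e_i}+\sqrt{P_j}\ket{e_j}\bigr)=\sum_i\deg(i)\sqrt{P_i}\ket{e_i}.
\]
I expect this to be the main obstacle: the result equals $\Delta(G)\Wp$ only when $\deg(i)=\Delta(G)$ for every $i$. The diagonal shift was presumably designed to absorb exactly this discrepancy, so I will recheck Step 2 carefully. The correction term contributes $\frac{1}{\Delta}\sum_i(\deg(i)-\Delta)\sqrt{P_i}\ket{e_i}$ to $\mathcal{H}_P^G\Wp$, while the edge sum contributes $\frac{1}{\Delta}\sum_i\deg(i)(1-2)\sqrt{P_i}\ket{e_i}$. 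Together these give
\[
\frac{1}{\Delta}\sum_i\bigl(-\deg(i)+\deg(i)-\Delta\bigr)\sqrt{P_i}\ket{e_i}=-\Wp .
\]
So the diagonal term in Eq.~\eqref{eq:xy-ws-mixer-2} should be read as $(\Delta-\deg(i))$ with the sign that makes this cancellation work. I would fix the sign convention by this computation, since the eigen-equation is exactly what determines it. With it, $\mathcal{H}_P^G\Wp=-\Wp$.

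\textbf{Step 4: ground-state property.} Write $\mathcal{H}_P^G\big|_{\text{HW}1}$ as $\frac{1}{\Delta}$ times the sum of the edge terms $H^\text{WS}_M(q_{ij})\oplus 0$ plus the nonnegative diagonal correction $(\Delta-\deg(i))\ket{e_i}\bra{e_i}$. Each edge term satisfies
\[
H^\text{WS}_M(q_{ij})\oplus 0\succeq-\Pi_{ij},
\]
so the edge sum is bounded below by $-\sum_i\deg(i)\ket{e_i}\bra{e_i}$. Adding the correction gives $\mathcal{H}_P^G\succeq\frac{1}{\Delta}\bigl(-\sum_i\Delta\ket{e_i}\bra{e_i}\bigr)=-I$. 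Every eigenvalue is therefore at least $-1$, and $\Wp$ attains this value, so it is a ground state.

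No connectivity or Perron--Frobenius argument is needed here. This is consistent with Corollary~\ref{cor:2} asserting only that $\Wp$ is \emph{a} ground state.
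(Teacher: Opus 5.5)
Your strategy (an exact edge-by-edge decomposition followed by an operator lower bound) is sound, but you have the sign of the diagonal term backwards, and as written Step~2 is false. The sign printed in Eq.~\eqref{eq:xy-ws-mixer-2}, $(\deg(i)-\Delta(G))\le 0$, is correct, and it is exactly the sign your recheck in Step~3 uses to obtain $-\Wp$. There is nothing to reinterpret. In any case the operator is given, so you may not choose its sign to make the desired eigen-equation hold.

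Your Step~2 identity $\mathcal{H}_P^G=I-\frac{2}{\Delta}\sum_{E}\ket{v_{ij}}\bra{v_{ij}}$, and the ``nonnegative correction $(\Delta-\deg(i))\ket{e_i}\bra{e_i}$'' in Step~4, both correspond to the opposite sign. Under that sign $\Wp$ is not an eigenvector unless $G$ is regular. For the path with edges $(1,2),(2,3)$ (so $\Delta=2$), your Step~2 operator gives $\mathcal{H}\Wp=-\sqrt{P_2}\ket{e_2}$. Likewise, the Step~4 arithmetic $-\deg(i)+c_i=-\Delta$ holds only for $c_i=\deg(i)-\Delta$. The sign of $c_i$ does not affect the inequality itself, since you add the same diagonal operator to both sides.

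With the correct sign, Step~2 should read
\[
\mathcal{H}_P^G\big|_{\mathrm{HW}1}=-I+\frac{2}{\Delta(G)}\sum_{(i,j)\in E}\ket{u_{ij}}\bra{u_{ij}},\qquad \ket{u_{ij}}=\frac{\sqrt{P_j}\ket{e_i}-\sqrt{P_i}\ket{e_j}}{\sqrt{P_i+P_j}}.
\]
This follows from $\mathcal{H}_{ij}(q_{ij})\big|_{\mathrm{HW}1}=2\ket{u_{ij}}\bra{u_{ij}}-\Pi_{ij}$ and $\sum_{E}\Pi_{ij}=\sum_i\deg(i)\ket{e_i}\bra{e_i}$. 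Since $\braket{u_{ij}|W_P}=0$ and a positive combination of projectors is positive semidefinite, Steps~3 and~4 collapse to one line: $\mathcal{H}_P^G\Wp=-\Wp$ and $\mathcal{H}_P^G\succeq-I$ on the Hamming-weight-1 subspace.

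Once repaired, your argument takes a genuinely different route from the paper's. The paper splits $G$ into connected components and isolated vertices. It applies Corollary~\ref{cor:app} (the eigen-equation plus Perron--Frobenius) to each component. It then rescales and shifts each sector Hamiltonian from the $1/\Delta(G')$ to the $1/\Delta(G)$ normalization, so every sector has ground energy $-1$, and writes $\Wp$ as a superposition of the sector ground states.

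Your argument needs neither the component decomposition nor Perron--Frobenius, and it treats connected and disconnected $G$ uniformly. It also gives more. The $-1$ eigenspace consists of the $\psi$ with $\braket{u_{ij}|\psi}=0$ for every edge, i.e.\ with $\psi_i/\sqrt{P_i}$ constant on each connected component. Its dimension is therefore the number of components, which recovers the uniqueness in Corollary~\ref{cor:app} for connected $G$ and the exact degeneracy in the disconnected case. The paper's route, in exchange, reuses the already-established connected case and makes the sector structure explicit.
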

Detailed proofs for Corollaries~\ref{cor:1}, \ref{cor:app} and~\ref{cor:2} are provided in Appendix~\ref{sec:proofs-gs}.

\subsection{Circuit Implementation}\label{sec:circuit-impl}

First, we describe the implementation of the entire XY-mixer $\mathcal{H}_P^G$, assuming that the two-qubit warm-start XY-mixer block $e^{-i\beta \mathcal{H}_{ij}(q_{ij})}$ is available as a primitive. Subsequently, we provide the concrete gate decomposition for this block.

\subsubsection{Trotterization}

A valid mixer must facilitate transition probabilities between every pair of states within its domain~\cite{hadfield2019}. We therefore require the topology $G$ to be connected, ensuring that $\exists \beta : \bra{e_i} e^{-i\beta \mathcal{H}_P^{G}} \ket{e_j} \neq 0$ for all $i, j$. Under this condition, $\mathcal{H}_P^{G}$ (as defined in Eq.~\eqref{eq:xy-ws-mixer-2}) is a valid mixer with $\ket{W_P}$ as its unique ground state according to Corollary~\ref{cor:app}.

Because terms $\mathcal{H}_{ij}(q_{ij})$ and $\mathcal{H}_{kl}(q_{kl})$ do not commute if they share a qubit ($\{i,j\} \cap \{k,l\} \neq \emptyset$), we utilize Trotterization to implement the time evolution of $\mathcal{H}^G_P$. Since blocks $e^{-i\beta \mathcal{H}_{ij}(q_{ij})}$ acting on non-intersecting qubit pairs can be applied simultaneously, we first determine an edge coloring $\mathcal{G} = \{G_1, G_2, \dots\}$ of the graph $G$. Each subgraph $G_l$ is a collection of disjoint edges (a matching), such that $\bigcup_l G_l = G$ and $\Delta(G_l) = 1$. By Vizing's theorem~\cite{diestel2025}, the number of colors required is $|\mathcal{G}| \in \{\Delta(G), \Delta(G)+1\}$, with $\Delta(G)$ being the maximum degree in the graph $G$.

The time evolution for a single color $G_l$ is given by:
\begin{align*}
    e^{-i\beta\mathcal{H}_P^{G_l}} = \prod_{i,j\in E_i} e^{-i\beta \mathcal{H}_{ij}(q_{ij})} \prod_{i} e^{-i\beta (\deg (i) - 1) \ket{e_i}\bra{e_i}}.
\end{align*}
This is directly implementable, since the diagonal phase terms commute with the two-body XY terms. In this formulation, a relative phase is applied to qubits that do not share an edge within the current layer $G_l$ ($\deg(i) = 0$); otherwise, the exponent vanishes ($\deg(i) = 1$).

Consequently, the $T$-step Trotterization of $\mathcal{H}^G_P$ is given by:
\begin{align}\label{eq:trotter-t}
    e^{-i\beta\mathcal{H}_P^{G}} \approx \prod_{t=1}^T \left( e^{-i\beta \frac{|\mathcal{G}| - \Delta(G)}{T \Delta(G)}} \prod_{G'\in \mathcal{G}} e^{\frac{-i\beta}{T\Delta(G)}\mathcal{H}_P^{G'}} \right),
\end{align}
where the leading phase factor in each layer only applies if the number of colors is $|\mathcal{G}| = \Delta(G) + 1$, contributing a constant global phase to the expression. By including this term, we ensure that the mixer yields a consistent, topology-independent phase evolution on the ground state, $e^{-i\beta\mathcal{H}_P^{G}}\Wp = e^{i\beta}\Wp$, for any $T$.

Because QAOA can be interpreted as a Trotterized adiabatic evolution, and since $\Wp$ is a ground state of each individual sub-mixer $\mathcal{H}^{G'}_P$ (as established by Corollary~\ref{cor:2}), setting $T=1$ is equivalent to formulating a QAOA circuit with multiple sequential mixers rather than a single composite one. Given this equivalence and its hardware efficiency, we strictly focus on the $T=1$ case throughout the remainder of this manuscript.

\paragraph*{Example} 
The ring topology $G_\text{ring}$ is a common, hardware-friendly XY-mixer topology that is $2$-regular and connected, meaning $\Delta(G)=2$. An edge coloring partitions the edges into \emph{even} ($\{(0, 1),(2,3),\dots\}$) and \emph{odd} ($\{(1,2), (3,4), \dots\}$) subsets, along with a third, \emph{last} subset ($\{(0, k-1)\}$) if the number of qubits $k$ is odd~\cite{hadfield2019}. This yields the color set $\mathcal{G} =\{G_\text{even}, G_\text{odd}, G_\text{last}\}$, and the evolution is given by:
\begin{align*}
    e^{-i\beta\mathcal{H}_P^{G_\text{ring}}} = e^{-i\frac{\beta}{2}(\mathcal{H}_P^{G_\text{last}}+1) } e^{-i\frac{\beta}{2}\mathcal{H}_P^{G_\text{odd}}} e^{-i\frac{\beta}{2}\mathcal{H}_P^{G_\text{even}}},
\end{align*}
where the final $e^{-i\frac{\beta}{2}(\mathcal{H}_P^{G_\text{last}}+1)}$ is only applied when $k$ is odd.

For an even $k$, this implementation coincides with standard approaches reported in the literature, as the subgraphs $G_\text{even}$ and $G_\text{odd}$ are strictly 1-regular~\cite{hadfield2019,wang2020}. However, in the odd $k$ case, our formulation introduces additional phase factors to the idle qubits not involved in a layer, ensuring that every individual layer preserves $\Wp$ as an eigenstate.

\subsubsection{Implementation of the warm-start XY-block}

The circuit implementation for the time evolution of the single-qubit warm-start mixer $e^{-iH_M^\text{WS}(q)\beta}$ is given by the decomposition $R_Y(\alpha)R_Z(-2\beta)R_Y(-\alpha)$, where $\alpha = 2 \arccos\sqrt{q}$~\cite{egger2021}. This protocol can be extended to the XY-mixer case, which requires embedding these rotations into the single-excitation subspace via $\mathrm{blockdiag}(1,R_Y(\theta), 1)$ and $\mathrm{blockdiag}(1,R_Z(-2\beta), 1)$. These block-diagonal matrices can be decomposed into at most two-qubit Pauli gates as follows:
\begin{align}
    \mathrm{blockdiag}(1,R_Z(-2\beta), 1) &= R_Z(-\beta) \otimes R_Z(\beta), \\
    \mathrm{blockdiag}(1,R_Y(\theta), 1) &=  R_{YX}(\theta / 2)R_{XY}(-\theta / 2),\nonumber
\end{align}
where $R_{XY}(\varphi)=e^{-iXY\varphi/2}$, $R_{YX}(\varphi)=e^{-iYX\varphi/2}$.

However, this naive approach results in four consecutive two-qubit Pauli rotations. In contrast, the standard, non-warm-started XY-block $e^{-i \beta \mathcal{H}(1/2)} = U_{XY}(\beta) = R_{XX}(-\beta)R_{YY}(-\beta)$ can be implemented using only two two-qubit Pauli rotations\footnote{Highly optimized implementations exist for $U_{XY}(\beta)$ that require only two \textsc{CNOT} gates. See the \href{https://github.com/Qiskit/qiskit/blob/2.4.2/qiskit/circuit/library/standard_gates/xx_plus_yy.py\#L108-L121}{Qiskit \texttt{XXPlusYYGate} implementation (v2.4.2)}~\cite{javadi-abhari2024}, with $R_{XX+YY}(-2\beta) = U_{XY}(\beta)$}, where $R_{XX}(\varphi) = e^{-i XX \varphi / 2}$ and $R_{YY}(\varphi) = e^{-i YY \varphi / 2}$. Therefore, we propose an alternative, hardware-efficient decomposition of the evolution based strictly on $U_{XY}$.

\begin{theorem}\label{th:rotation}
The warm-started XY-block is given by the exact decomposition
\begin{align*}
    e^{-i\beta \mathcal{H}(q)} = (R_Z(\phi_1) \otimes I) U_{XY}(\phi_2) (I \otimes R_Z(-\phi_1)),
\end{align*}
with the angles defined as:
\begin{align}
\begin{aligned}
    \phi_1 &= \mathrm{arctan2}\left((1-2q)\sin\beta, \cos\beta\right), \\
    \phi_2 &= \arcsin\left( 2\sqrt{q(1-q)} \sin\beta\right).
\end{aligned}
\end{align}
\end{theorem}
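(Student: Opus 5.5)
The proof reduces both sides to $2\times 2$ matrices on the single-excitation subspace $\mathrm{span}\{\ket{01},\ket{10}\}$ and then compares entries. Every factor in the statement acts as the identity on $\ket{00}$ and $\ket{11}$, so those states need only a separate check. I use primed Paulis $X',Z'$ for the Paulis acting within this subspace in the ordered basis $(\ket{01},\ket{10})$.

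\emph{Left-hand side.} Set $c=1-2q$ and $s=2\sqrt{q(1-q)}$, so that $c^2+s^2=1$. By Eq.~\eqref{eq:x-wsmixer}, the block of $\mathcal{H}(q)$ is $cZ'-sX'$. This is a unit Pauli vector and squares to the identity, so
\begin{align*}
e^{-i\beta\mathcal{H}(q)}\big|_{\text{block}}=\cos\beta\, I-i\sin\beta\,(cZ'-sX')
=\begin{pmatrix}\cos\beta-ic\sin\beta & is\sin\beta\\ is\sin\beta & \cos\beta+ic\sin\beta\end{pmatrix}.
\end{align*}
On $\ket{00}$ and $\ket{11}$ it acts trivially, since $\mathcal{H}(q)$ is the zero block there.

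\emph{Right-hand side.} Use $R_Z(\theta)=\mathrm{diag}(e^{-i\theta/2},e^{i\theta/2})$.
\begin{itemize}
\item On $\ket{01}$, both $R_Z(\phi_1)\otimes I$ and $I\otimes R_Z(-\phi_1)$ contribute the phase $e^{-i\phi_1/2}$.
\item On $\ket{10}$, both contribute $e^{i\phi_1/2}$.
\item On $\ket{00}$ and $\ket{11}$, the two phases cancel.
\end{itemize}
Each $Z$-rotation therefore restricts to $D=\mathrm{diag}(e^{-i\phi_1/2},e^{i\phi_1/2})$ on the block and to the identity outside it.

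Next, $U_{XY}(\phi_2)=R_{XX}(-\phi_2)R_{YY}(-\phi_2)=e^{i\phi_2(XX+YY)/2}$, because $XX$ and $YY$ commute. Since $(XX+YY)/2=\mathrm{blockdiag}(0,X',0)$, this is $e^{i\phi_2X'}$ on the block and the identity on $\ket{00},\ket{11}$.

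The right-hand side block is therefore
\begin{align*}
D\,e^{i\phi_2X'}D=\begin{pmatrix}e^{-i\phi_1}\cos\phi_2 & i\sin\phi_2\\ i\sin\phi_2 & e^{i\phi_1}\cos\phi_2\end{pmatrix}.
\end{align*}

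\emph{Matching entries.} The off-diagonal entries agree exactly when $\sin\phi_2=s\sin\beta$, which is the stated $\phi_2$. The argument of $\arcsin$ lies in $[-1,1]$ because $|s|\le1$. Since $\arcsin$ takes values in $[-\pi/2,\pi/2]$, we have $\cos\phi_2\ge0$, and
\begin{align*}
\cos^2\phi_2=1-s^2\sin^2\beta=\cos^2\beta+c^2\sin^2\beta=|\cos\beta+ic\sin\beta|^2.
\end{align*}
So the diagonal entry $e^{i\phi_1}\cos\phi_2$ has the correct modulus. Its phase $\phi_1$ must be the argument of $\cos\beta+ic\sin\beta$, which is $\mathrm{arctan2}(c\sin\beta,\cos\beta)$, as stated. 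The $(1,1)$ entry is the complex conjugate of the $(2,2)$ entry on both sides, so it matches automatically.

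\emph{Expected obstacle.} The step needing the most care is sign and convention bookkeeping rather than any conceptual difficulty:
\begin{itemize}
\item checking that $U_{XY}(\phi_2)$ restricts to $e^{+i\phi_2X'}$ and not $e^{-i\phi_2X'}$;
\item confirming that the qubit ordering makes both $R_Z$ factors act as the same $D$ rather than as inverses;
\item handling the degenerate case $\cos\phi_2=0$, i.e.\ $\cos\beta=0$ and $q=\tfrac12$. There the diagonal vanishes, $\phi_1$ is irrelevant, and the identity holds trivially.
\end{itemize}
If a sign convention turns out reversed, replacing $\phi_2\to-\phi_2$ or swapping the two $R_Z$ factors would absorb it. I would fix these conventions first by evaluating the case $q=\tfrac12$, where $\phi_1=0$ and the statement must reduce to $e^{-i\beta\mathcal{H}(1/2)}=U_{XY}(\beta)$.
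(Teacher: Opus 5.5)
Your proposal is correct and follows essentially the same route as the paper's proof in Appendix~\ref{sec:proof-rotation}. Both restrict to $\mathrm{span}\{\ket{01},\ket{10}\}$, match the off-diagonal entries through $\phi_2$, and match the diagonal entries using $\cos^2\beta+(1-2q)^2\sin^2\beta=1-4q(1-q)\sin^2\beta$ together with the $\mathrm{arctan2}$ phase. You are slightly more careful than the paper: you derive the right-hand block explicitly, note $\cos\phi_2\ge 0$, and treat the degenerate case $q=\tfrac12$, $\cos\beta=0$, where $\mathrm{arctan2}(0,0)$ is undefined. One small wording slip: each $R_Z$ factor alone is not the identity on $\ket{00},\ket{11}$, only their product is, which suffices because $U_{XY}$ acts trivially there.
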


The proof of Proposition~\ref{th:rotation} is provided in Appendix~\ref{sec:proof-rotation}.

\subsubsection{Scaling the XY-block}\label{sec:scaling}

Because the XY-part of $\mathcal{H}(q)$ diminishes as $q$ approaches the extreme points $q \to 0$ or $q \to 1$, the effective mixing magnitude $|\bra{01}e^{-i\beta \mathcal{H}(q)} \ket{10}|$ decreases. Consequently, we observe that the optimal $\beta$ values for QAOA increase as $\sqrt{q(1-q)}$ decreases. To counteract this and ensure consistent $\beta$ parameters, we implement a scaled and shifted version of $\mathcal{H}(q)$, defined as:
\begin{align}
    \tilde{\mathcal{H}}(q) = \frac{1}{2\sqrt{q(1-q)}} \left(\mathcal{H}(q) + I_{XY}\right) - I_{XY},
\end{align}
where $I_{XY} = \mathrm{blockdiag}(0, I, 0)$ and commutes with $\mathcal{H}(q)$. Since scaling and shifting do not alter the eigenstates, and the ground state energy remains $-1$ in this case, it follows that $\mathcal{H}_{ij}(q)\Wp = \tilde{\mathcal{H}}_{ij}(q)\Wp$. Therefore, we can safely replace $\mathcal{H}_{ij}(q)$ with $\tilde{\mathcal{H}}_{ij}(q)$ in all propositions and corollaries from Sec.~\ref{sec:ws-mixers}.

The circuit implementation of this modified time evolution simply requires rescaling the parameter $\beta \gets \beta / (2 \sqrt{q(1-q)})$ for the XY-block, alongside applying an additional phase to the $\{\ket{10}, \ket{01}\}$ sector. While this phase shift could be achieved with a two-qubit $R_{ZZ}$ gate, we restrict our focus to the Hamming-weight-1 subspace, meaning the Hamming-weight-2 sector ($\ket{11}$) is never populated. Thus, instead of a two-qubit gate, we can efficiently apply two single-qubit phase gates: $P(\varphi) \otimes P(\varphi)$ with
\begin{align*}
    \varphi = \left(1 - \frac{1}{2\sqrt{q(1-q)}}\right)\beta.
\end{align*}

\subsection{Iterative Warm-Starting Algorithm}\label{sec:iws-alg}

Having established a method to bias the XY-QAOA evolution towards an input probability distribution $P$, we now address how to determine $P$. As discussed in Sec.~\ref{sec:backgroud}, there are two primary methods for biasing search space exploration. First, warm-starting can be based on the solution of a classical solver (e.g., either a relaxed continuous solution or a rounded solution from an approximation algorithm)~\cite{egger2021, carmo2025}. Second, the adaptive bias approach iteratively updates a bias field based on intermediate solutions generated by the quantum algorithm, steering exploration toward previously discovered high-quality solutions~\cite{yu2022, yu2025}.

We adopt the latter, iterative approach. This avoids the need for a problem-specific classical algorithm to generate an initial solution. Furthermore, we observed that classical solvers default to exploring only integer solutions for relaxed versions of one-hot-constrained problems (optimal solutions lie at the vertices of polytopes~\cite{horst1996}), thereby limiting their utility as a source of bias in our case. The adaptive iterative approach, conversely, is universally applicable regardless of the structure of the input problem. While we use the mechanics of adaptive bias, we retain the \emph{probability} and \emph{warm-starting} terminology rather than the \emph{external-field} nomenclature, as it aligns more naturally with one-hot constraints.

Consider a binary optimization problem containing $L$ one-hot constraints, where each constraint $l$ spans $k_l$ non-overlapping binary variables. For simplicity, we assume every variable is associated with exactly one constraint; incorporating standalone unconstrained binary variables, when present in the input problem, is straightforward using standard methods~\cite{egger2021, yu2025}. By assigning a qubit to each binary variable and selecting a mixer topology $G_l$ for each constraint, we define the combined mixer as:
\begin{align}
    U_M(\beta, P) = \bigotimes_{l=1}^L e^{-i\beta \mathcal{H}^{G_l}_{P_l}},
\end{align}
where $P = \{P_1, \dots, P_L\}$ is the collection of probability distributions for all constraints. Correspondingly, the warm-started initial state is defined as:
\begin{align}
   \ket{\psi_0(P)} = \bigotimes_{l=1}^L \ket{W_{P_l}}.
\end{align}

The algorithm initializes in an equal superposition, meaning $P^{(0)}_{l,i} = 1/k_l$ for all variables $i$ within constraint $l$. In the first step, we optimize the variational parameters $\vec{\beta}$ and $\vec{\gamma}$ for a $p$-layer QAOA circuit. Because the warm-started initial state is explicitly constructed to be the ground state of the warm-started mixer, we assume the optimal parameter landscape does not shift drastically during the iterative probability updates~\cite{lopez-ruiz2025}. A numerical experiment reported in Sec.~\ref{sec:eval} will support this assumption. Consequently, we perform this parameter optimization only once and reuse the resulting $\vec{\beta}$ and $\vec{\gamma}$ throughout all subsequent steps of the algorithm. An empirical analysis on how much re-optimizing parameters will improve performance can be found in Appendix~\ref{app:reopt}.

\begin{algorithm}[t]
\caption{Warm-Start QAOA (WS-QAOA)}\label{alg:ws-qaoa}
\KwData{$H_C, U_M, P, M, \vec\beta, \vec\gamma$}
\KwResult{Sequence of measurement shots $X = (x_1, \dots, x_M)$}
\For{$s = 1$ \KwTo $M$}{
    $\ket{\psi} \gets \bigotimes_{l=1}^L \ket{W_{P_l}}$ \Comment*{initialize state} 
    
    \For{$i = 0$ \KwTo $p-1$}{
        $\ket{\psi} \gets e^{-i\gamma_i H_C} \ket{\psi}$ \Comment*{apply cost operator} 
        $\ket{\psi} \gets U_M(P, \beta_i) \ket{\psi}$ \Comment*{apply WS-mixer} 
    }
    Measure $\ket{\psi}$ to obtain bitstring $x_s$\;
}
\end{algorithm}

To obtain an initial biased distribution $P^{(1)}$, we execute WS-QAOA (Algorithm~\ref{alg:ws-qaoa}) using the uniform distribution $P^{(0)}$. We measure $M$ shots, yielding bitstrings $X^{(0)} = \{x^{(0)}_1, \dots, x^{(0)}_M\} $ and their corresponding objective energies $E_m^{(0)} = C(x_m^{(0)})$. We then compute the updated probabilities using a Boltzmann-weighted expectation value over the measured samples, similar to~\cite{lopez-ruiz2025}:
\begin{align}\label{eq:prob-update}
    P_{l,i}^{(t+1)} = \frac{1}{Z_l} \sum_{x\in X^{(t)}} e^{-\beta_T C(x) / \Delta^{(t)}} x_{l,i},
\end{align}
where $Z_l$ is chosen to normalize the probabilities for each constraint $l$, $\beta_T$ is the inverse temperature, and $\Delta^{(t)} = \max_m E_m^{(t)} - \min_m E_m^{(t)}$ represents the energy spread of the samples at iteration $t$. 

Note that the symbol $\beta$ appears in two distinct contexts in this framework: as the variational QAOA parameters ($\vec{\beta}$) and as the inverse temperature ($\beta_T$). Subscript $T$ will always refer to the inverse temperature.

\begin{algorithm}[t]
\caption{Iterative Warm-Start QAOA (IWS-QAOA)}\label{alg:iws-qaoa}
\KwData{$H_C, U_M, \{k_1, \dots, k_L\}, p, M, \overline{M}, \beta_T, \epsilon$}
$P_{l,i}^{(0)} \gets 1 / k_l \quad \forall l = 1, \dots, L, \;\forall i = 1, \dots, k_l$\;
$M_\text{agg} \gets 0$\;
$t \gets 0$\;
$\vec\beta, \vec\gamma \gets \mathrm{OptimizeQAOA}(H_C, U_M, P^{(0)}, p)$\;
\While{$M_{\text{agg}} < \bar{M}$}{
    $X^{(t)} \gets \mathrm{WS\text{-}QAOA}(H_C, U_M, P^{(t)}, M, \vec\beta, \vec\gamma)$\;
    $E_m^{(t)} \gets C(x_{m}^{(t)}) \quad \forall m \in \{1, \dots, M\}$\;
    $\Delta^{(t)} \gets \max_m E_m^{(t)} - \min_m E_m^{(t)}$\;
    $P^{(t+1)} \gets \mathrm{UpdateProbabilities}(X^{(t)}, E^{(t)}, \Delta^{(t)}, \beta_T$ \Comment*{based on Eq.~\eqref{eq:prob-update}}
    $P^{(t+1)} \gets \mathrm{Clamp}(P^{(t+1)}, \epsilon)$\;
    $M_\text{agg} \gets M_\text{agg} + M$\;
    $t \gets t + 1$\;
}
\end{algorithm}

To prevent the distribution $P_{l,i}^{(t)}$ from becoming overly concentrated on a single state $i$, resulting in detrimental QAOA performance~\cite{cain2023}, we clamp the probabilities to the interval $[\frac{\epsilon}{k_l-1}, 1-\epsilon]$, similar to the regularization approaches in Refs.~\cite{egger2021, carmo2025}. Using these updated probabilities, we run WS-QAOA again to generate new samples, iterating until we have accumulated a total of $\bar{M}$ shots. The complete procedure is formalized in Algorithm~\ref{alg:iws-qaoa}.

The Iterative Warm-Start QAOA (IWS-QAOA) relies on the three essential hyperparameters $\epsilon$, $ \beta_T$, and $M$, whose influences are outlined below:
\begin{itemize}
    \item \textbf{Regularization ($\epsilon \in [0, 1-1/k_l]$):} This parameter determines the strength of algorithmic exploitation. If $\epsilon$ is too small, the algorithm may prematurely converge to a local minimum.
    \item \textbf{Inverse temperature ($\beta_T$):} This governs how aggressively the update procedure biases the search toward lower-energy solutions. A larger $\beta_T$ accelerates convergence but also increases the risk of the algorithm becoming trapped in local minima.
    \item \textbf{Shots per iteration ($M$):} This defines the sample size drawn in each step. A smaller $M$ reduces the quantum resources required per iteration and accelerates the update cycle, but it introduces higher statistical noise that can lead to local minima. However, a small $M$ also presents a distinct mathematical advantage: if the objective function possesses internal symmetries, a smaller sample size facilitates symmetry breaking, naturally biasing the algorithm toward a specific optimal sector. For instance, Max-Cut has an internal $Z_2$ symmetry~\cite{tsvelikhovskiy2026}: for every solution exists a solution with flipped bits but equal energy. Evaluating Eq.~\ref{eq:prob-update} exactly would result in non-biased probabilities, whereas shot-based evaluation helps push the probabilities to one sector.
\end{itemize}

As a stochastic hybrid quantum heuristic, IWS-QAOA accelerates the search for optimal solutions while potentially demanding fewer quantum resources than default QAOA. Note that only $p$ (QAOA layers) and $\overline{M}$ (total shots) are responsible for the quantum resource requirements of IWS-QAOA. The re-evaluation of probabilities comes at virtually no cost for a classical computer. Due to the inherent risk of converging to local minima, it is highly recommended to execute multiple independent repetitions of the algorithm. 

Sec.~\ref{sec:eval} will demonstrate the practical efficacy of IWS-QAOA through rigorous numerical experiments. Notably, because the Boltzmann-weighted expectation value serves as a purely classical selection mechanism, the entire algorithmic loop can be executed without QAOA by substituting the quantum circuit with classical random sampling from the iteratively updated probability distributions. This serves as a natural baseline in our benchmarks.
\section{Numerical Simulations}\label{sec:eval}

\begin{figure*}[t]
    \centering
    \includegraphics[width=\linewidth]{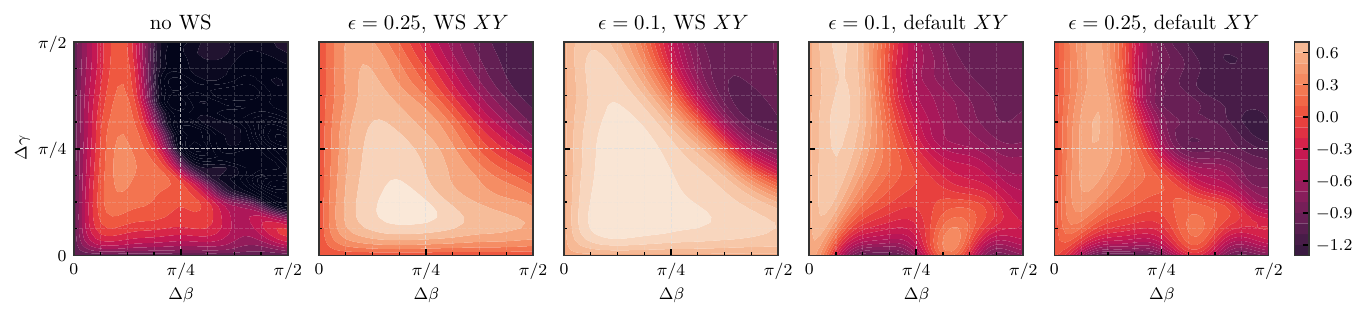}
    \caption{Contour plots showing the energy landscape in terms of approximation ratio of WS-QAOA for $p=5$ on a TSP instance with $N=7$. The leftmost plot applies no warm starting; the following two plots show the landscape with warm starting based on the ideal solution, using regularization with $\epsilon = 0.25$ and $\epsilon = 0.1$. The final two plots show warm-starting using $\ket{W_p}$ and the default XY-mixer, which is not aligned with the initial state.\label{fig:landscape}}
\end{figure*}

\subsection{Problem Instances}

\subsubsection{Max-\textit{k}-Cut}
QAOA was initially developed as an approximate algorithm for Max-Cut~\cite{farhi2014a}. Max-Cut remains a central optimization problem for benchmarking QAOA and variants like warm-starting~\cite{yu2025, egger2021}. Max-$k$-Cut (MkC) is the natural extension that separates the nodes into $k$ partitions rather than two. For a graph $G(V, E)$ with $|V| = N$ and edge weights $w_{uv}$, it is formulated using one-hot encoding as:
\begin{align}
    \begin{split}
    \max_x& \sum_{u,v \in E} w_{uv} \left(1 - \sum_{i = 1}^k x_{u,i} x_{v,i}\right) \\ \text{s.t.}& \,\, \sum_{i=1}^k x_{v,i} = 1\,\,\forall v\in V.
    \end{split}
\end{align}
Since the constant term can be omitted, we implement the equivalent minimization problem:
\begin{align*}
     \min_{x'}& \sum_{u,v} w_{uv} \sum_{i = 1}^k x_{u,i} x_{v,i}   \quad \text{s.t.} \,\, \sum_{i=1}^k x_{v,i} = 1\,\,\forall v\in V\setminus\{1\},
\end{align*}
where we additionally fix the first node to be in category 1, i.e., $x_{1,1} = 1$ and $x_{1,i} = 0$ for all $i>1$. Here, $x'$ denotes all non-fixed variables. Fixing this node breaks the natural $S_k$ symmetry of the problem, which assists the IWS-QAOA biasing.

\paragraph*{Problem Instances} Similar to the Max-Cut instances from Ref.~\cite{egger2021}, we use a complete graph $K_N$ and sample $w_{uv}$ from a uniform distribution over the set $\{-1, -0.9, \dots, 1\}$. We consider problem scenarios with $N \in \{12, 16\}$ for $k=3$, and $N \in \{10, 14\}$ for $k=4$. For each scenario, we generate five distinct instances.

\subsubsection{Traveling Salesperson Problem}
One of the most famous combinatorial optimization problems is the Traveling Salesperson Problem (TSP), which seeks to find the shortest cycle connecting all nodes in a given fully connected graph $K_N$ with edge weights $w_{uv} > 0$. It is naturally formulated as a quadratic integer program:
\begin{align}
\begin{split}
    \min_x& \sum_{u,v\in E}w_{uv} \sum_{t=1}^N x_{u,t}x_{v,(t+1)\%N} \\ \text{s.t.}& \sum_v x_{v,t} = 1 \,\,\forall t, \text{ and } \sum_t x_{v,t} = 1\,\, \forall v\in V.
    \end{split}
\end{align}

In contrast to MkC, the TSP formulation features two overlapping sets of one-hot constraints, which effectively define a permutation matrix. Because XY-mixers can only enforce non-overlapping one-hot constraints, we must encode one of these constraint sets as a quadratic penalty in the objective function. Furthermore, to break symmetry, we fix the starting city to be visited first (i.e., $x_{1,1} = 1$ and $x_{1,t} = 0$ for all $t>1$). This yields the revised formulation:
\begin{align}
\min_{x'}& \sum_{u,v} w_{uv} \sum_{t=1}^N x_{u,t}x_{v,t+1} + \lambda \sum_{t=2}^N \left(\sum_{v\in V\setminus\{1\}} x_{v,t} - 1\right)^2\nonumber \\
\text{s.t.}& \sum_{t=2}^N x_{v,t} = 1 \,\;\forall v\in V\setminus\{1\},
\end{align}
where $x'$ denotes the non-fixed variables.

\paragraph*{Problem Instances} For each problem size $N \in \{6, 7, 8, 9\}$, we generate five distinct instances by placing cities equally spaced around a circle of radius two. Each city's radius is then offset by sampling from a normal distribution with a standard deviation of 1, inspired by Refs.~\cite{schawe2016, bucher2024}. The penalty parameter is fixed at $\lambda = 2$.

\subsection{Metrics, Parameters, and Simulation Technique}

The approximation ratio is defined through the expectation value $E = \bra{\psi}H_C\ket{\psi}$ and the optimal solution energy $E_\text{opt}$:
\begin{align}\label{eq:approx_ratio}
    r = 1 - \frac{|E - E_\text{opt}|}{|E_\text{opt}|},
\end{align}
serving as a way to quantify the quality of the solution ensemble sampled from the QAOA.

With IWS-QAOA, we are particularly interested in how fast the algorithm finds a high-quality solution. To this end, we compute the expected running best solution energy trace as follows:
\begin{align}\label{eq:exp_trace}
    E_s = \sum_x |\braket{x|\psi}|^2 \begin{cases}C(x) &\text{if } C(x) < E_{s-1} \\ E_{s-1} &\text{else} \end{cases},
\end{align}
where $E_0 = E$. The metric $E_s$ describes the expected best solution after $s$ shots drawn from the algorithm, and it is expected to converge toward the optimal solution if the probability of sampling it is non-vanishing. Note that $E_s$ is not the exact expected running best solution; instead, it is a recursively evaluated approximation of that quantity, giving us a connected and monotonic curve throughout IWS restarting. Normalizing $E_s$ according to Eq.~\eqref{eq:approx_ratio} yields the approximation trace $r_s$.

Finally, the probability $P_\text{opt}$ of obtaining an optimal solution, defined as
\begin{align}
    P_\text{opt} = \sum_{x \in \{x | C(x) = E_\text{opt}\}} |\braket{x|\psi}|^2,
\end{align}
is also a vital metric for benchmarking quantum optimization algorithms~\cite{bucher2024}.

For the hyperparameters of IWS-QAOA, we chose a regularization of $\epsilon = 0.2$ and an inverse temperature of $\beta_T = 15$, using varying numbers of shots $M\in \{100, 200, 500\}$. These parameters were identified during a preliminary hyperparameter study. We found that the algorithm's performance is relatively robust, meaning that slight variations ($\epsilon \pm 0.1$, $\beta_T \pm 10$) yield comparable results. Furthermore, we use a complete graph for the mixer topology. For the QAOA variational parameter optimization, we rely on the BFGS algorithm~\cite{fletcher2008, 2020SciPy-NMeth} to optimize a linear schedule consisting of four parameters $\{\beta_0, \Delta\beta, \gamma_0, \Delta\gamma\}$:
\begin{align}\label{eq:linear-schedule}
    \beta_i = \beta_0 - \frac{i \Delta \beta }{p-1}, \quad \gamma_i = \gamma_0 + \frac{i \Delta \gamma}{ p-1}.
\end{align}
For $p=1$, we optimize only $\beta_0$ and $\gamma_0$ and omit the rest. Due to the algorithm's heuristic nature, we repeat IWS-QAOA ten times for each problem instance.

Lastly, for simulating the QAOA circuits, we exploit the fact that each qubit set associated with a one-hot constraint never leaves the Hamming-weight 1 subspace. Therefore, we only need to track $k_l$ statevector entries per constraint instead of $2^{k_l}$. In total, the effective state vector size is reduced to $\prod_l k_l$, building on methods from Ref.~\cite{bucher2025}. We also precompute the diagonal of the cost Hamiltonian by evaluating the objective over all feasible states, following prior QAOA simulation work~\cite{stein2024, lykov2023, golden2023}. This simulation technique allows us to simulate up to an $N=9$ TSP instance ($8 \times 8 = 64$ qubits, but with a drastically reduced state vector size of $8^8 = 2^{24}$) on a consumer-grade GPU.

\subsection{Experimental Validation of Warm-Started XY-Mixer}\label{sec:parameter-validation}

Before analyzing the performance of the hybrid algorithm, we verify the warm-started XY-mixer developed in Sec.~\ref{sec:ws-mixers} by directly comparing its energy landscapes with those of the non-warm-started, default XY-mixer. We choose a TSP instance with $N=7$ and use the ideal solution as the warm-start state with different regularization strengths $\epsilon$ to isolate the difference, and compare the default XY-mixer to our aligned, warm-started XY-mixer.

Fig.~\ref{fig:landscape} shows the energy landscape in terms of the approximation ratio for a grid of linear QAOA parameter schedules using $\Delta\beta$ and $\Delta\gamma$, with $\beta_0 = \Delta\beta (p - 1/2) /p$ and $\gamma_0 = \Delta\gamma/(2p)$. We observe that warm-starting improves the approximation ratio regardless of whether the XY-mixer is aligned (the brighter areas become larger). Furthermore, as expected, smaller values of $\epsilon$ yield better solution qualities. Most importantly, fully warm-starting the XY-mixer profoundly affects the energy landscape. While the landscapes of the warm-started XY-mixer closely resemble the non-warm-started baseline, the optimization landscape significantly degrades when the default XY-mixer is used with a biased initial state. Additionally, the region of high-quality parameterizations expands for warm-started mixers while retaining the characteristic triangular shape identified in Ref.~\cite{montanez-barrera2025}. This confirms that aligning the XY-mixer with the initial state $\Wp$ is highly advantageous.

Lastly, we address the scaling of the warm-started XY-mixer block introduced in Sec.~\ref{sec:scaling}. Without this scaling (not shown), the landscape would stretch along the $\Delta \beta$ axis as $\epsilon$ decreases. With the scaling applied, the optimal $\Delta \beta$ values remain largely invariant. Consequently, high-quality parameters from the non-warm-started landscape remain effective in the warm-started scenario, justifying the optimize-once parameter strategy employed in IWS-QAOA. Further analysis on re-optimization is presented in Appendix~\ref{app:reopt}.

\begin{figure*}
    \centering
    \subfloat[Approximation ratio\label{fig:mkc_approx_ratio}]{\includegraphics[width=0.5\linewidth]{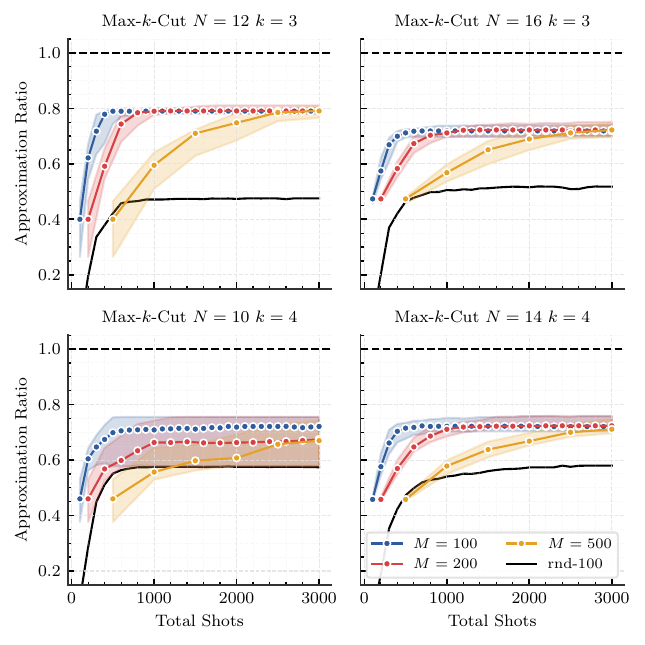}}%
    \subfloat[Approximation trace following Eq.~\eqref{eq:exp_trace}\label{fig:mkc_trace}]{\includegraphics[width=0.5\linewidth]{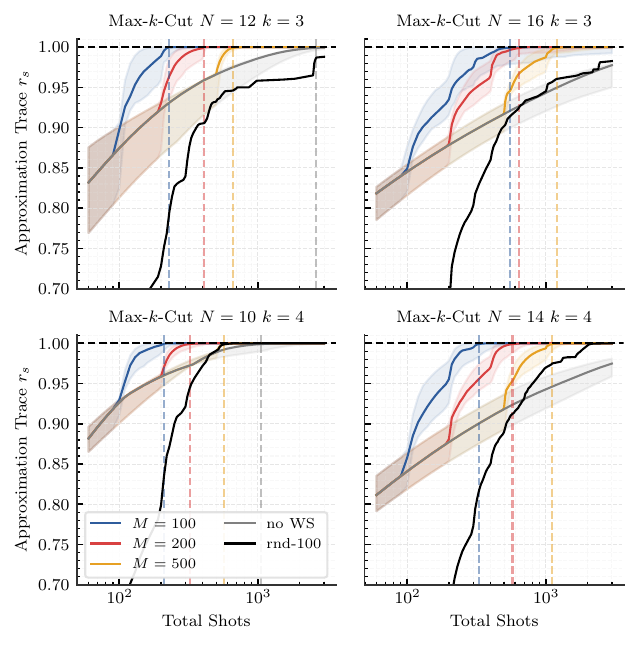}}
    \caption{Approximation ratio~(a) and approximation trace~(b) of IWS-QAOA at $p=1$ as a function of the total number of shots ($M_\text{agg}$ from Algorithm~\ref{alg:iws-qaoa}) for $M \in \{100, 200, 500\}$ and $\overline{M} = 3000$ across four MkC instance classes. Each panel displays the median over five instances, with ten runs per instance. Error bands indicate the interquartile range. The solid black line represents the median performance of IWS using classical random sampling with $M = 100$ (the best-performing sample size among those tested). Vertical dashed lines in panel~(b) mark the median number of total shots required to identify the optimal solution.}
    \label{fig:mkc_results}
\end{figure*}

\begin{figure}
    \centering
    \includegraphics[width=\linewidth]{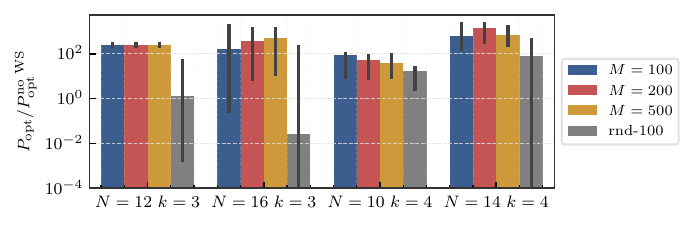}
    \caption{Median improvement ratio of the optimal solution probability, $P_\text{opt}$, achieved by IWS-QAOA compared to the baseline without warm-starting (no WS) across four MkC problem classes for various sample sizes $M$. Error bars indicate the interquartile range. The gray bars represent the performance of IWS using classical random sampling with $M=100$. For IWS-QAOA, $P_\text{opt}$ is evaluated directly from the state vector following the final iteration of IWS-QAOA.}
    \label{fig:mkc_popt}
\end{figure}

\begin{figure*}
    \centering
    \includegraphics[width=\linewidth]{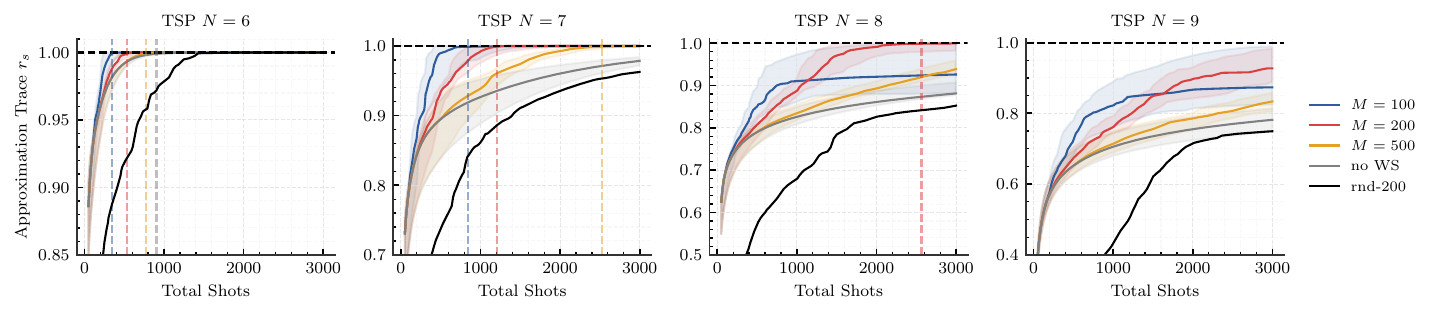}
    \caption{Approximation trace of IWS-QAOA at $p=1$ as a function of the total number of shots for $M\in \{100, 200, 500\}$ and $\overline{M} = 3000$ on TSP instances ranging from 6 to 9 cities. Each panel displays the median over five instances, with ten runs per instance. Error bands indicate the interquartile range. The solid black line represents the median baseline performance of IWS using classical random sampling with $M=200$, which outperformed $M=100$ for these TSP instances.} 
    \label{fig:tsp_trace}
\end{figure*}

\begin{figure}
    \centering
    \includegraphics[width=\linewidth]{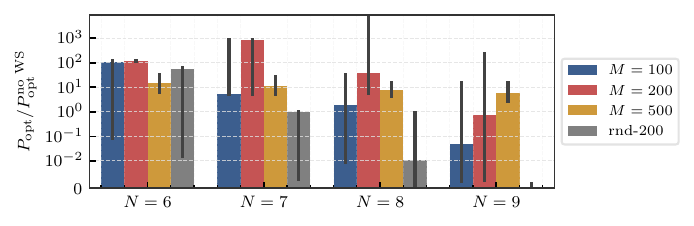}
    \caption{Median improvement ratio of $P_\text{opt}$ achieved by IWS-QAOA relative to the baseline without warm-starting (WS) for the four TSP instance sizes across various $M$ values. Error bars indicate the interquartile range. The gray bars represent IWS with classical random sampling at $M=200$. For IWS-QAOA, $P_\text{opt}$ is evaluated directly from the state vector following the final iteration of IWS-QAOA.}
    \label{fig:tsp_popt}
\end{figure}

\begin{figure*}
    \centering
    \includegraphics[width=\linewidth]{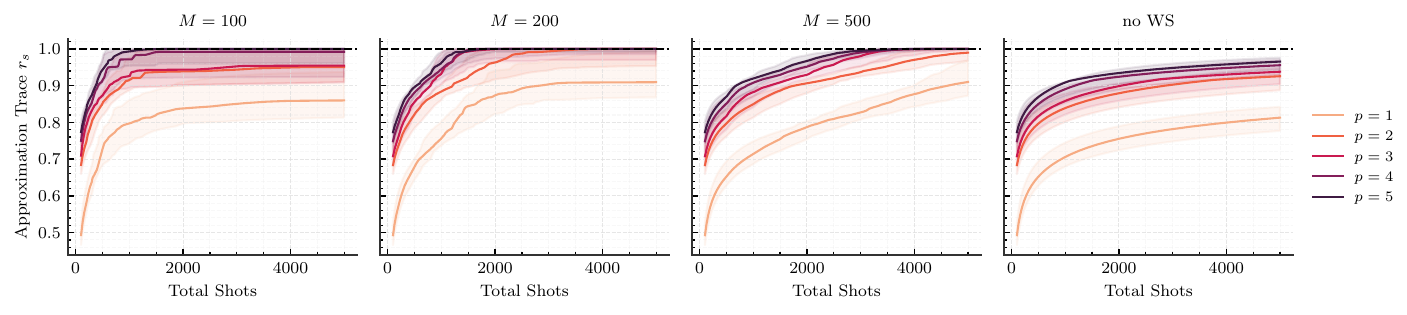}
    \caption{Approximation trace of IWS-QAOA for circuit depths $p\in \{1,\dots,5\}$ as a function of the total number of shots. Results are shown for $M\in \{200, 500\}$ and $\overline{M} = 5000$ on the 9-city TSP instances. Each panel displays the median over five instances, with ten runs per instance, and error bands indicating the interquartile range.}
    \label{fig:tsp_reps_trace}
\end{figure*}

\begin{figure}
    \centering
    \includegraphics[width=\linewidth]{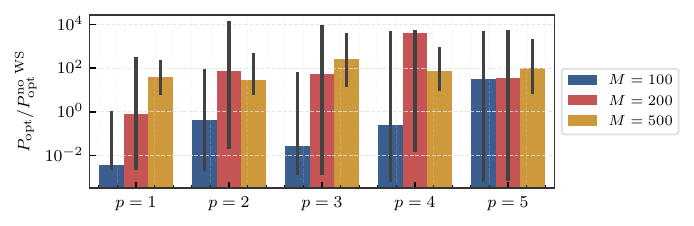}
    \caption{Median improvement ratio of $P_\text{opt}$ achieved by IWS-QAOA relative to the non-warm-started baseline for the 9-city TSP instances across circuit depths $p\in\{1,\dots,5\}$. Error bars indicate the interquartile range. For IWS-QAOA, $P_\text{opt}$ is evaluated directly from the state vector following the final iteration.}
    \label{fig:tsp_reps_popt}
\end{figure}

\subsection{Simulation Results}
We now analyze the IWS-QAOA simulation results for the MkC and TSP problems.

\subsubsection{Max-\textit{k}-Cut}

Fig.~\ref{fig:mkc_approx_ratio} shows the approximation ratio of $p=1$ IWS-QAOA with respect to the total shots gathered throughout the algorithm execution. It is apparent that IWS-QAOA---independent of $M$---improves upon the base QAOA approximation ratio (which corresponds to the performance at the first data point). Furthermore, we observe that all values of $M$ converge to the same saturation level. Due to the regularization factor $\epsilon$, exact convergence to $1$ is precluded. In the $N=10, k=4$ case, the convergence levels vary slightly, and the variance of the values increases. Expectedly, the median of the random runs (black) also improves drastically compared to the non-warm-started case. Yet, the saturation level is significantly worse than that of IWS-QAOA, except for the $N=10, k=4$ case. This data also suggests that smaller $M$ values converge faster, as they undergo more warm-starting update iterations for the same number of total shots.

Fig.~\ref{fig:mkc_trace} shows the expected best solution at the current shot drawn from the algorithm, which effectively acts as a solution quality versus runtime plot. For the baseline no-warm-start QAOA, we observe a smooth curve that slowly approaches 1, suggesting that we will eventually sample the optimal solution. For $(N,k) = (12,3)$ and $(10, 4)$, this occurs at approximately 3000 and 1000 shots, respectively. IWS-QAOA finds the optimal solution in all cases, with smaller values of $M$ converging more rapidly. For $(10, 4)$, only about 200 shots suffice, while for $(16, 3)$ approximately 550 shots are required at $M = 100$. Finally, the random sampling baseline demonstrates that for $k=4$, the random IWS algorithm successfully finds the optimal solution, and does so relatively quickly ($<1000$ shots). However, at $k=3$, this is not the case. From these results, we can deduce that $(10, 4)$ is a relatively easy instance, also supported by the baseline QAOA solution quality being comparatively good. Consequently, the random sampling heuristic also performs well on this instance.

Finally, Fig.~\ref{fig:mkc_popt} compares the $P_\text{opt}$ of the baseline QAOA against the IWS-QAOA $P_\text{opt}$ after the final iteration. We observe a roughly two-orders-of-magnitude increase in probability across all instances, independent of the selected $M$. Additionally, at $k = 4$ (which appear to be the easier instances), random IWS also improves over the baseline QAOA. Conversely, at $k=3$, there is no clear improvement for the random algorithm; instead, the median is $\leq 1$, and the variance is high.

We conclude that IWS-QAOA successfully accelerates the solution-finding process for MkC instances. Although using $M=100$ is more prone to statistical noise and becoming trapped in local minima, it proves sufficient for the specific MkC instances investigated here. Still, we observed that random sampling with IWS is a viable optimization method for the MkC instances in question, especially for $k=4$. 

\subsubsection{Traveling Salesperson Problem}

Fig.~\ref{fig:tsp_trace} displays the approximation trace of the $p=1$ IWS-QAOA across different TSP instance sizes. In the smallest case ($N=6$), all methods find the optimal solution in fewer than 2000 shots. However, for $N \geq 7$, some methods begin to fail to identify the optimal solution. The non-warm-started QAOA baseline does not reach an approximation trace of 1 within 3000 shots. Similarly, the random IWS baseline fails to find the optimal solution in these cases and remains below the non-warm-started QAOA. For $N\geq 8$, we also observe that choosing a sample size $M < 200$ that is too small can trap the algorithm in local minima. Although $M=100$ initially increases most steeply, its rate of improvement slows drastically after approximately 800 shots. While $M=500$ steadily increases solution quality, the more runtime-intensive iterations hinder it from surpassing $M=200$ in performance. Furthermore, its large associated interquartile range indicates that IWS-QAOA converges to various local minima at differing distances from the optimal solution. Overall, however, IWS-QAOA finds better solutions more quickly than the standard QAOA.

We hypothesize that convergence to local minima drastically reduces the $P_\text{opt}$ ratio, thereby increasing the probability of sampling sub-optimal but high-quality solutions. Fig.~\ref{fig:tsp_popt} confirms this assumption. For the smallest instances, IWS-QAOA improves the optimal sampling probability across all values of $M$. However, as the problem size increases, the improvement ratio drops, particularly for small $M$, which aligns with our local minimum hypothesis. Still, IWS-QAOA at $M=500$ and $N=9$ improves $P_\text{opt}$ by roughly a factor of ten, even though its approximation trace in Fig.~\ref{fig:tsp_trace} barely deviates from the baseline.

Our analysis thus far has focused on the $p=1$ case, which, while sufficient for the previously evaluated (small) MkC instances, falls short for the TSP. Therefore, we run additional experiments at circuit depths of $p\in\{1,\dots,5\}$ for the 9-city TSP instances. Fig.~\ref{fig:tsp_reps_trace} depicts the approximation traces for $M\in\{100, 200, 500\}$. Without warm-starting, increasing the circuit depth lifts the trace toward the optimum, yet even at $p=5$ standard QAOA remains insufficient to reliably obtain the optimal solution. Conversely, IWS-QAOA consistently finds the optimal solution for $M=200$ and $M=500$, provided $p\geq3$, but only for $p=5$ with $M=100$. Again, a smaller $M$ leads to faster convergence, but with the trade-off of not finding the ideal solution.

Fig.~\ref{fig:tsp_reps_popt} shows the resulting improvement ratio of $P_\text{opt}$. Here, it is clearly visible that $M=200$ benefits significantly from better initial solutions. While the median of runs at $p=1$ converged to local minima, circuit depths of $p\geq2$ improve the baseline probability by one to two orders of magnitude. For $M=100$, we observe sufficient improvement only at $p=5$. However, these values exhibit a significantly larger variance compared to $M=500$, which consistently improves the baseline $P_\text{opt}$ (by $\approx10\times$ for $p\leq2$ and $>100\times$ for $p \geq 3$).

Overall, these results demonstrate that IWS significantly reduces the runtime required to sample high-quality or optimal solutions when the underlying QAOA circuit produces sufficient-quality samples. Conversely, if the baseline QAOA inherently struggles to find near-optimal solutions, the IWS routine is highly prone to converging to local minima.

\section{NISQ-Hardware Evaluation}\label{sec:hw}

Choosing a hardware-friendly XY-mixer topology facilitates low-depth circuit transpilation of the mixer to the hardware graph and native gate set. However, the problem-specific cost Hamiltonian, $H_C$, must also be mapped to the hardware topology. This typically requires integrating \textsc{swap} gates to embed the problem's interaction graph into the physical qubit connectivity. While promising ansätze exist for efficient embedding~\cite{weidenfeller2022, matsuo2023, qaio25}, circuit depths are still expected to surpass the threshold for extracting meaningful results, especially as the qubit count scales (assuming a maximum viable two-qubit-gate depth of approximately 100 for current IBM devices). To circumvent this issue, we do not transpile standard MkC or TSP Hamiltonians to the hardware. Instead, we generate a constrained spin-glass-like problem with hardware-tailored instances that inherently require shallow implementation depths, drawing inspiration from prior work~\cite{pelofske2023, chandarana2025, kotil2025}. Despite being highly specific, these sparse instances allow the execution of problem sizes that fill the entire device, whereas dense problem instances would lead to excessively deep circuits. For our benchmarks, we generated five 144-qubit problem instances to evaluate IWS-QAOA on the \texttt{ibm\_boston} Heron r3 QPU. Because NISQ hardware is inherently noisy, the sampled bitstrings frequently fail to satisfy all one-hot constraints of the input problem. To counteract this, we employ a classical post-processing step to correct any constraint violations.

\subsection{Problem Instances}

\begin{figure*}
\subfloat[Coupling map \texttt{ibm\_boston}\label{fig:coupling_map}]{\includegraphics[width=0.45\textwidth]{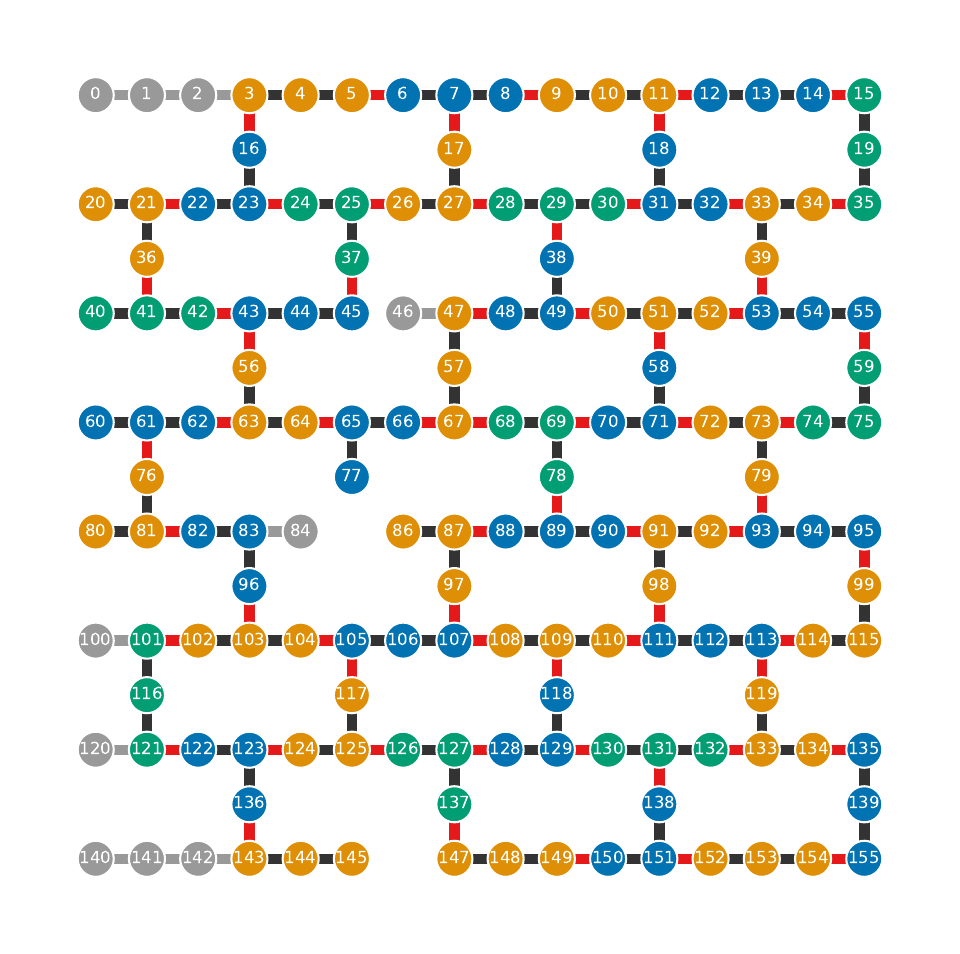}}
\hfil
\subfloat[Interaction graph \texttt{ibm\_boston}\label{fig:interaction_graph}]{\includegraphics[width=0.45\textwidth]{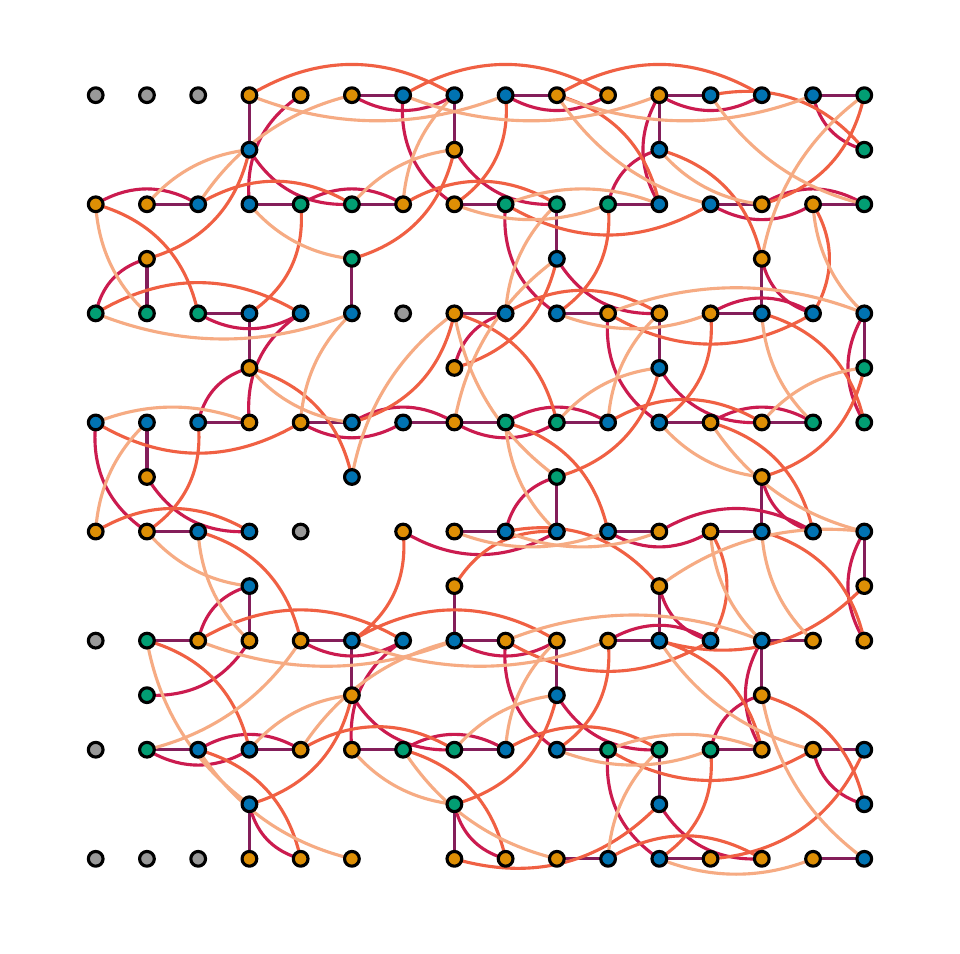}}
\caption{Hardware-tailored 144-qubit problem instance for the \texttt{ibm\_boston} QPU, encompassing 48 one-hot constraints that each span three binary variables. (a) The device coupling map highlighting the placed qubit triplets. Qubits belonging to the same triplet share a distinct color, while inactive qubits and couplers are depicted in gray. Black edges denote couplers reserved for \textsc{swap} gates or the warm-started XY-mixer implementation. Red edges represent the physical couplers used to embed the optimization problem's interactions. (b) The effective interaction graph of the resulting problem instance. Red edges indicate two-body terms used to model the problem, each assigned a random weight drawn uniformly from $\{-1, -0.9, \dots, 1\}$. The color shade indicates the \textsc{swap} layer during which the coupling is realized. In total, the instance supports 241 distinct interactions.}
\end{figure*}

\begin{table}[]
    \centering
    \setlength{\tabcolsep}{6pt}
\begin{tabular}{lrrrrrr}
\toprule
 $p$ & depth & depth-2Q & $\sqrt{X}$ & $R_Z$ & \textsc{cz} & $X$ \\
 \midrule
1 & 116 & 32 & 2937 & 2205 & 1232 & 99 \\
2 & 213 & 59 & 5431 & 3893 & 2312 & 200 \\
3 & 309 & 86 & 7945 & 5590 & 3400 & 295 \\
\bottomrule
\end{tabular}
    \caption{Transpiled circuit metrics and native gate counts of the hardware-tailored instances for varying QAOA depths ($p$).}
    \label{tab:circuit-metrics}
\end{table}

To generate problem instances tailored to the heavy-hex topology of an IBM QPU, we first filter the device's coupling map (see Fig.~\ref{fig:coupling_map}) by removing any qubits or couplers with unacceptably high error rates (specifically, \textsc{CZ} errors $>5\%$ and readout errors $>30\%$). Next, we identify adjacent qubit triplets---which encode the binary variables for each one-hot constraint---such that the inter-triplet connectivity is maximized while the error rates of the selected couplers are minimized. The specific optimization problem solved to determine this optimal triplet placement is detailed in Appendix~\ref{sec:hw-optimization}. The final mapping, with 144 selected qubits encompassing $N=48$, 3-bit one-hot constraints, is illustrated in Fig.~\ref{fig:coupling_map}.

The red edges in Fig.~\ref{fig:coupling_map} represent the physical couplings available to model the optimization objective. Because relying solely on these initial interactions would yield an extremely sparse and overly simplistic optimization problem, we artificially increase the complexity by inserting three layers of \textsc{swap} gates. Within each triplet, these \textsc{swap} operations are applied sequentially between qubits $(1, 2)$, $(2, 3)$, and again $(1, 2)$. After each \textsc{swap} layer, a new set of physical interactions becomes accessible due to the altered adjacency of the encoded binary variables. The composite interaction graph across all four \textsc{swap} phases (before, between, and after the \textsc{swap} layers) is visualized in Fig.~\ref{fig:interaction_graph}, yielding a total of 241 realizable quadratic terms on \texttt{ibm\_boston}. The overall optimization problem can thus be formulated as:
\begin{align}
    \min_x \sum_{(i,j) \in I} w_{ij} x_{i} x_{j} \quad\text{s.t.}\quad \sum_{i\in t} x_i = 1 \quad \forall t \in \mathcal{T}^*,
\end{align}
where $\mathcal{T}^*$ represents the set of all selected qubit triplets and $I$ denotes the set of accessible interconnections shown in Fig.~\ref{fig:interaction_graph}. To generate five distinct problem instances, the weights $w_{ij}$ are uniformly sampled from $\{-1, -0.9, \dots, 1\}$.

A key advantage of our chosen \textsc{swap}-layer configuration, compared to more generic routing strategies~\cite{chandarana2025}, is that the qubit triplets remain spatially confined; their internal order is simply reversed. This allows us to apply a linear XY-mixer topology within each QAOA layer without uncomputing the \textsc{swap} operations. Furthermore, to implement the subsequent QAOA layer, we can simply execute the \textsc{swap} and interaction layers in reverse.

Concerning topology, we implement the XY-line-mixer for each qubit-triplet, defined by the topology $G(V,E)$ with vertices $V=\{1, 2, 3\}$ and edges $E = \{(1, 2), (2, 3)\}$, being connected but not regular. Consequently, following Corollary~\ref{cor:app}, $\ket{W_P}$ remains the unique ground state of $\mathcal{H}^{G}_P$. The decomposition is given by
\begin{align*}
    e^{-i\beta \mathcal{H}_P^G} \approx e^{i\frac{\beta}{2} \ket{e_1}\bra{e_1}} e^{-i\frac{\beta}{2}\mathcal{H}_{2,3}(q_{2,3})}e^{i\frac{\beta}{2} \ket{e_3}\bra{e_3}} e^{-i\frac{\beta}{2}\mathcal{H}_{1,2}(q_{1,2})}.
\end{align*}

For a QAOA depth of $p=1$, the WS-QAOA circuits for these hardware-tailored instances transpile to a total circuit depth of 116 each, which corresponds to a two-qubit gate depth of approximately 32. Comprehensive circuit metrics and native gate counts for depths $p=1, 2,$ and $3$ are summarized in Table~\ref{tab:circuit-metrics}.

\begin{figure}
    \centering
    \includegraphics[width=\linewidth]{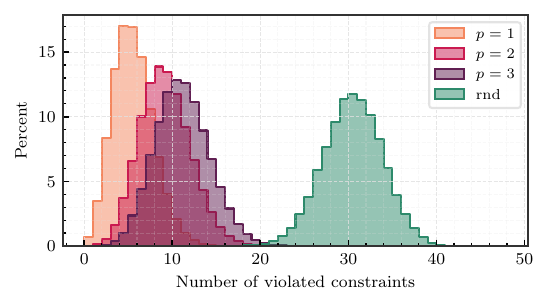}
    \caption{Histograms depicting the number of violated constraints per bitstring sampled from the WS-QAOA circuits. Each distribution aggregates 250\,000 measurement shots (comprising 5 problem instances, evaluated with 10 independent repetitions of 5000 shots each). The green histogram represents a random-sampling baseline exhibiting a mean violation rate of 67\%.}
    \label{fig:hist-constrs}
\end{figure}

\subsection{Post-Processing}\label{sec:postprocessing}

Although evolution under $U_M$ theoretically confines the system to the Hamming-weight-1 subspace for each constraint, hardware noise inevitably causes some measured samples to violate these constraints. We can model the occurrence of a violation within a single constraint by a failure probability $f$. Consequently, the success rate of sampling a completely feasible solution is $(1-f)^{48}$. To achieve a 99\% feasibility rate across the sampled bitstrings, the single-constraint failure rate would need to remain below $2\times10^{-4}$. Given that current IBM QPUs exhibit two-qubit gate errors on the order of $\sim10^{-3}$, attaining such a high feasibility ratio is unrealistic.

Fig.~\ref{fig:hist-constrs} illustrates the distribution of constraint violations across different QAOA depths. From this data, we deduce the empirical failure rates of single constraints to be $f(p=1) = 11\%$, $f(p=2) = 18\%$, and $f(p=3) = 22\%$. At $p=1$, we obtain only 1766 feasible samples, corresponding to a success rate of $0.7\%$ (leftmost bar). For $p=2$, this number drops to 34, and at $p=3$, only a single feasible solution is observed. The ratio of feasible solutions to possible binary assignments for this instance is given by $3^{48} / 2^{144} \approx 3.6 \times 10^{-21}$.

Consequently, discarding infeasible solutions via strict post-selection does not scale. Nevertheless, because the majority of constraint assignments within a given sample remain feasible (only a fraction of 10--20\% violating one or more constraints), we instead apply a classical post-processing (PP) step to repair infeasible assignments. Specifically, we employ a greedy steepest-descent algorithm based on a quadratically penalized QUBO cost function:
\begin{align}
    C(x) = \sum_{(i,j) \in I}w_{ij} x_{i} x_{j} + \lambda \sum_{t \in \mathcal{T}^*}\left(1-\sum_{i\in t} x_i \right)^2,
\end{align}
with the penalty parameter set to $\lambda =10$. The algorithm evaluates a candidate solution generated by the QAOA circuit by systematically flipping each bit, recording the corresponding change in the cost function, and then reverting the flip. If any bit flips yield a cost reduction, the algorithm permanently applies the single flip that produces the steepest descent. Each pass has run-time complexity $O(n)$, where $n$ is the number of binary variables. This process continues until a local minimum is reached and no further cost improvements are possible, corresponding to at most $n$ iterations and therefore a worst-case complexity of $O(n^2)$ per sample. Similar post-processing techniques have already been used in other hardware experiments~\cite{montanez-barrera2025, he2026}.

Notably, this greedy strategy inherently avoids altering bits associated with already-satisfied one-hot constraints. Because the algorithm only applies single-bit flips, transitioning between two valid one-hot assignments requires at least two flips. Consequently, any such transition would temporarily force the system into an infeasible state, which is heavily penalized by the objective function and thus rejected by the descent criteria.

Because this greedy algorithm serves as a classical heuristic in its own right, we also evaluate a baseline model that pairs random sampling with the same post-processing routine (rnd-PP) in the IWS-QAOA benchmark. Appendix~\ref{app:postprocessing} presents an in-depth study on the effects of the post-processing method.

\begin{table}
    \centering
    \setlength{\tabcolsep}{6pt}
    \begin{tabular}{lrrrr}
         \toprule
         $p$ & $\beta_0$ & $\Delta\beta$ & $\gamma_0$ & $\Delta \gamma$ \\
         \midrule
         1 & 0.90 & -- & 0.58 & -- \\
         2 & 1.20 & 0.96 & 0.41 & 0.58 \\
         3 & 1.36 & 1.13 & 0.32 & 0.63 \\
         \bottomrule 
    \end{tabular}
    \caption{Averaged optimal linear-schedule QAOA parameters, obtained from simulations of hardware-tailored problem instances of 24, 36, and 48 qubits.}
    \label{tab:qaoa-params}
\end{table}

\begin{figure*}
    \centering
    \includegraphics[width=\linewidth]{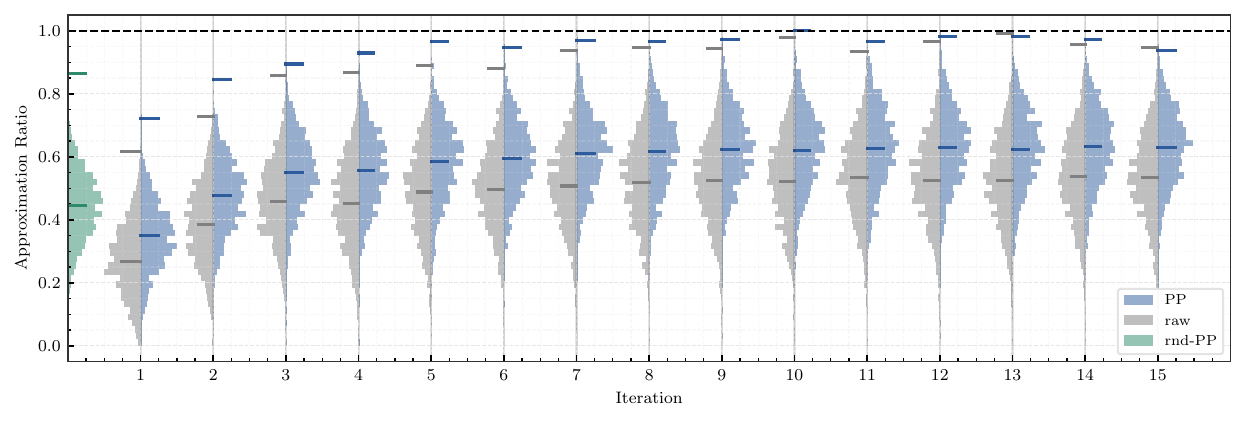}
    \caption{Histograms of the measurement outcomes in terms of approximation ratio from \texttt{ibm\_boston} at a circuit depth of $p=1$ and sample size $M=200$ over the first 15 IWS iterations for problem instance 2. Data from all 10 independent repetitions are aggregated. Raw sampled data are depicted in gray, while post-processed (PP) samples are shown in blue. Horizontal lines denote the mean and the best sampled values for each distribution. The standalone green histogram on the far left establishes a baseline using random sampling combined with post-processing (rnd-PP).}
    \label{fig:hw-hist-trace}
\end{figure*}

\subsection{Results}

\begin{figure*}
    \centering
    \includegraphics[width=\linewidth]{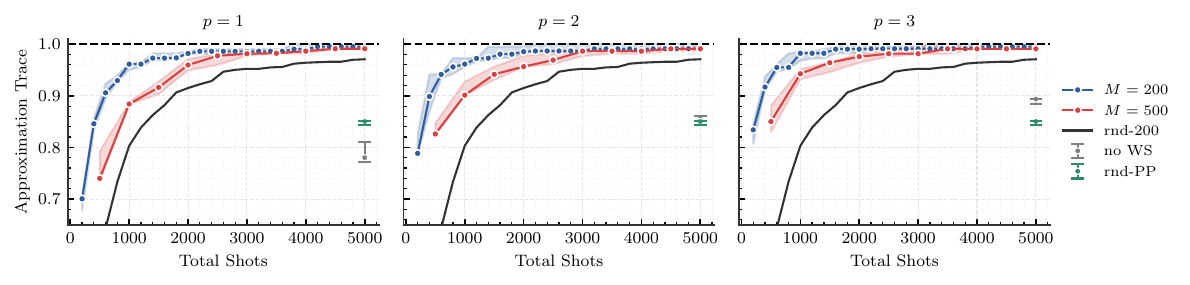}
    \caption{Best objective value found across the 10 independent repetitions as a function of total accumulated shots, plotted for different QAOA depths $p$. Solid lines represent the median over the 5 problem instances, with error bands indicating the interquartile range. The black solid line denotes the IWS heuristic utilizing random sampling with $M=200$. The gray data points the baseline QAOA performance without IWS, evaluated with 5000 shots across 10 repetitions. The green data points represent the rnd-PP baseline evaluated with 50\,000 samples. Note that the random baseline results are identical across all panels.}
    \label{fig:hw-trace}
\end{figure*}

\begin{table*}[]
    \centering
\begin{tabular}{r @{\hspace{2em}} r @{\hspace{2em}} rrrr @{\hspace{2em}} rrrr @{\hspace{2em}} rrrr}
\toprule
 & & \multicolumn{4}{c}{No WS} & \multicolumn{4}{c}{$M = 200$} & \multicolumn{4}{c}{$M = 500$} \\
\cmidrule(r){3-6} \cmidrule(r){7-10} \cmidrule{11-14}
Instance & optimal & rnd & $p=1$ & $p=2$ & $p=3$ & rnd & $p=1$ & $p=2$ & $p=3$ & rnd & $p=1$ & $p=2$ & $p=3$ \\
\midrule
0 & $-22.2$ & $-10.5$ & $-18.1$ & $-19.5$ & $-20.1$ & $-21.5$ & $-21.8$ & $-22.0$ & $-22.0$ & $-21.5$ & $-21.9$ & $-21.9$ & $-22.0$ \\
1 & $-21.7$ & $-10.7$ & $-17.6$ & $-18.7$ & $-18.9$ & $-21.1$ & $-21.6$ & $\mathbf{-21.7}$ & $\mathbf{-21.7}$ & $-20.2$ & $-21.5$ & $-21.4$ & $-21.5$ \\
2 & $-22.7$ & $-11.2$ & $-17.5$ & $-19.2$ & $-20.3$ & $-22.3$ & $\mathbf{-22.7}$ & $-22.5$ & $-22.5$ & $-22.4$ & $-22.5$ & $\mathbf{-22.7}$ & $\mathbf{-22.7}$ \\
3 & $-22.3$ & $-9.9 $ & $-17.4$ & $-19.0$ & $-19.7$ & $-21.4$ & $\mathbf{-22.3}$ & $\mathbf{-22.3}$ & $\mathbf{-22.3}$ & $-22.1$ & $\mathbf{-22.3}$ & $-22.1$ & $-22.2$ \\
4 & $-20.7$ & $-8.5 $ & $-15.5$ & $-17.7$ & $-18.5$ & $-20.1$ & $-20.5$ & $-20.5$ & $-20.6$ & $-20.0$ & $-20.6$ & $-20.6$ & $-20.4$ \\
\bottomrule
\end{tabular}
\caption{Comparison of the best objective values obtained using standard QAOA, IWS-QAOA evaluated on \texttt{ibm\_boston}, and the IWS heuristic with random sampling. Objective values highlighted in bold are optimal solutions. The true optimal values for each instance were determined using SCIP~\cite{SCIPOptSuite10}.\label{tab:best-obj}}
\end{table*}

Optimizing the variational QAOA parameters on hardware is inherently challenging due to noisy measurement outcomes. For this reason, we avoid on-hardware optimization; instead, we reuse parameters optimized via simulation for hardware-tailored problems with 24, 36, and 48 qubits. Specifically, we average the optimal linear schedule parameters obtained across these problem sizes for each depth $p \in \{1, 2, 3\}$. These parameter schedules are summarized in Table~\ref{tab:qaoa-params}.

We execute IWS-QAOA with hyperparameters $\beta_T=15$ and $\epsilon=0.1$ for higher exploitation, testing $M \in \{200, 500\}$. As a baseline, we also evaluate standard QAOA (without IWS) for a total number of $\overline{M} = 5000$ shots. The total quantum resources, in terms of runtime on the QPU, used for IWS-QAOA and standard QAOA are the same. Each algorithmic configuration is evaluated across the 5 problem instances using 10 independent repetitions. Fig.~\ref{fig:hw-hist-trace} presents the measurement histograms for the first 15 IWS iterations at $p=1$ and $M=200$. This figure contrasts the raw sample distributions with their post-processed counterparts. Post-processing (PP) visibly shifts the distributions toward higher approximation ratios, despite altering only roughly 10\% of the measured bits (cf. Fig.~\ref{fig:hist-constrs}). We observe substantial improvements within the first $\sim 5$ iterations. Subsequently, the rate of improvement decelerates, though the ensemble means continue to shift marginally between iterations 5 and 15. Notably, IWS-QAOA surpasses the random sampling with post-processing (rnd-PP) heuristic by the third iteration.

Fig.~\ref{fig:hw-trace} illustrates the best objective value found across all 10 repetitions as a function of the total accumulated shots. The performance is similar across all instances, with both $M=200$ and $M=500$ consistently approaching the optimal solution. Even though $M=500$ utilizes a larger ensemble to compute the next warm-start iteration, the final solution qualities are comparable to those achieved with $M=200$. Compared to the standard QAOA baseline (solid gray line), IWS-QAOA consistently discovers better solutions. While the initial best solutions are similar across varying depths $p$, the algorithm consistently converges to comparable near-optimal final solutions as it progresses. Comparing these results against the classical baselines reveals that, even without IWS, QAOA (with PP) outperforms rnd-PP at $p=3$. Although the IWS heuristic utilizing purely classical random sampling identifies high-quality solutions after a few iterations, it fails to surpass the performance of IWS-QAOA across all evaluated depths.

Table~\ref{tab:best-obj} displays the best objective values discovered by IWS-QAOA and the IWS random sampling heuristic for each instance. IWS-QAOA successfully identifies the optimal solution in instances 1, 2, and 3. Across all runs, $M=200$ finds 6 optimal solutions, whereas $M=500$ discovers only 3. There is no definitive indication of which QAOA depth is strictly superior, as all depths $p$ yield the same number of optimal solutions. However, the total aggregate deviation from the true optimal values is minimized for $M=200$ and $p=3$ at $0.5$.

\section{Conclusion}\label{sec:conclusion}

In this work, we addressed the challenge of solving constrained combinatorial optimization problems using quantum heuristics by introducing a fully warm-started XY-mixer and integrating it into an iterative warm-start framework. Analytically, we proved that the warm-started $\Wp$ state is the unique ground state of our proposed XY-mixer Hamiltonian within the Hamming-weight-1 subspace, ensuring theoretical alignment between the initial state and the mixer. Furthermore, we provided an efficient circuit implementation using two-qubit Pauli rotations.

Our numerical simulations on Max-$k$-Cut and Traveling Salesperson Problem instances demonstrate that IWS-QAOA significantly accelerates the search for optimal solutions compared to standard XY-QAOA. By iteratively updating the probability distribution based on previous samples, the algorithm effectively boosts the probability of sampling the optimal solution, often by orders of magnitude, but it may also diminish $P_\text{opt}$ when stuck in a local minimum if $p$ is small. We observed that while small sample sizes ($M$) allow for faster iterations, they can lead the algorithm into local minima, especially on more complex landscapes, such as larger TSP instances. However, provided the underlying QAOA is sufficiently capable (e.g., using  $p>1$ QAOA layers), IWS-QAOA consistently overcomes these barriers and accelerates the runtime for sampling good, or even optimal, solutions.

Finally, we successfully deployed our approach on IBM's Heron r3 QPU (\texttt{ibm\_boston}) using hardware-tailored problem instances with 144 qubits. Due to inherent NISQ hardware noise, strict adherence to the Hamming-weight constraints cannot be guaranteed, requiring a classical greedy steepest descent post-processing step to fix infeasible assignments. Our results confirm that the combination of IWS-QAOA and post-processing outperforms both standard QAOA and random sampling baselines, successfully identifying optimal solutions on actual quantum hardware.

In conclusion, warm-starting XY-mixers combined with an iterative update strategy presents a highly effective approach for constrained quantum optimization.

\subsection{Future Directions}

The results presented in this work suggest several natural directions for further improving and extending IWS-QAOA. In particular, future work could investigate the following paths forward:

\paragraph*{Improved probability update rules.}
The present implementation updates the warm-start distribution using a Boltzmann-weighted expectation value over measured samples. Future work could explore alternative probability evaluation methods, such as rank-based updates, CVaR-inspired updates~\cite{cadavid2025}, or adaptive momentum-based update methods from classical gradient descent~\cite{kingma2017}. This could increase the algorithm's stability against local minima and reduce the number of iterations required for convergence.

\paragraph*{Combination with genetic search.}
Along similar lines, the iterative update scheme could be embedded into a genetic search algorithm or combined with other classical metaheuristics. For instance, multiple warm-start distributions could be evolved in parallel, recombined, and selected based on the quality of their sampled solutions. The parallel distributions can be incentivized to promote diversity, helping reduce susceptibility to similar local minima across parallel runs. Quantum genetic algorithms have been explored before and may be adapted accordingly~\cite{king2019, gabor2022}.

\paragraph*{Extension to other constraint types.}
While this work focuses on Hamming-weight-1 constraints, an important direction is to apply IWS to broader classes of constraints. Since XY-mixers are Hamming-weight-preserving, the logical next step is to investigate how to include higher Hamming-weight constraints into the same scheme. Furthermore, different warm-starting Hamiltonians need to be investigated for other classes of constraint-preserving mixer constructions~\cite{hadfield2019, fuchs2022}. For instance, Grover mixers~\cite{bartschi2020} can be warm-started using the Amplitude-Amplification Mixers from Ref.~\cite{christiansen2025}. Such extensions would broaden the range of constrained optimization problems that the proposed algorithm can address.

\paragraph*{XY-Mixer Improvements.}
The implementation considered here prioritizes hardware efficiency. Future work could study the impact of larger Trotter numbers, higher-order decompositions, or alternative, well-suited topologies. These variants may more accurately approximate the intended mixer Hamiltonian and could improve performance, albeit at the cost of increased circuit depth.

\paragraph*{Initialization from stronger classical methods.}
In this work, the iterative procedure does not require a problem-specific classical initialization. Nevertheless, it seems worthwhile to investigate how the method performs when initialized using classical approaches, beyond SDP rounding, as in Ref.~\cite{he2026}. Alternatively, initializing for classical heuristics, such as simulated annealing, could also be a potential route worth considering. Such initializations could provide useful prior information while the iterative quantum-classical update scheme retains the ability to further refine the search distribution. As a consequence, we expect fewer iterations to solve the problem to optimality.

\paragraph*{Improved parameter optimization.}
The numerical simulations in this work use a linear QAOA parameter schedule. Future work could investigate more advanced scheduling strategies, such as Fourier parameterizations~\cite{zhou2020}, interpolation schemes~\cite{apte2026} and parameter transfer between related instances~\cite{shaydulin2023}. Additionally, shot-friendly optimization techniques are required for parameter optimization on QPUs~\cite{hao2025}. These approaches may improve performance and reduce the overhead of classical optimization.

\paragraph*{Error mitigation and correction.}
Finally, the hardware experiments in this work rely on classical post-processing to repair infeasible samples caused by noise. Advanced error-mitigation~\cite{nation2021} or optimization-focused error-correction~\cite{he2025} techniques~ could reduce this dependence and make the quantum part of the algorithm more directly effective. As hardware fidelity improves, it will also be important to systematically analyze the scalability of the proposed hybrid approach on larger and less hardware-tailored problem instances.

\section*{Acknowledgments}
This work was supported by the German Federal Ministry of Research, Technology, and Space (BMFTR) under the funding program
“Förderprogramm Quantentechnologien -- von den Grundlagen zum Markt” (funding program quantum technologies -- from basic research to market), project QuCUN, 13N16199. We acknowledge the use of IBM Quantum Credits via the IBM Quantum Startups Program for this work. The views expressed are those of the authors and do not reflect the official policy or position of IBM or the IBM Quantum Platform team.
\section*{Data Availability}
Data from the experiments in the manuscript is available in the repositoriy~\href{https://github.com/aqarios/warm-start-xy-data}{github.com/aqarios/warm-start-xy-data}.
\bibliographystyle{myapsrev4-2}
\bibliography{references}
\appendix
\clearpage
\onecolumngrid

\section{State Preparation of the Biased W-State}\label{app:wstate}
In this section, we discuss the circuits required to construct the biased $\ket{W_P}$ state.

\paragraph{Linear Synthesis}
The standard $\ket{W}$ state for $k$ qubits is constructed starting from the state $\ket{e_1} = \ket{10\cdots0}$. Following Ref.~\cite{cruz2019}, we apply a sequence of gates $B_{ij}(q) = \textsc{cnot}_{ji} C(R_Y(2 \arccos \sqrt{q}))_{ij}$, where $C(\cdot)$ denotes a controlled application. These gates have the following properties:
\begin{align}
    B(q)\ket{00} = \ket{00} \quad\text{and}\quad B(q)\ket{10}= \sqrt{q}\ket{10} + \sqrt{1-q}\ket{01},
\end{align}
effectively acting as parameterized \textsc{swap} gates. Applying $B$ in a linear chain from the initialized qubit yields the equal superposition state:
\begin{align}
    \prod_{i=1}^{k-1} B_{i,i+1}\left(\frac{1}{k-i+1}\right) \ket{e_1} = B_{k-1,k}(1/2)\cdots B_{1,2}(1/k)\ket{e_1} = \frac{1}{\sqrt{k}}\sum_i \ket{e_i} = \ket{W}.
\end{align}
The resulting state preparation circuit has a depth of $O(k)$, as each operator depends on the preceding one.

This scheme can be generalized to the biased state $\ket{W_P}$ by adjusting the parameters of $B$ to match the probability distribution $P$:
\begin{align}\label{eq:wp-state-preparation}
    \ket{W_P} = \prod_{i=1}^{k-1} B_{i,i+1}\left(\frac{P_i}{1-\sum_{j < i} P_{j}}\right)\ket{e_1} = \sum_i \sqrt{P_i}\ket{e_i}.
\end{align}

\paragraph*{Example} 
Given a probability distribution for 4 bits $P_1,\dots,P_4$, we iteratively apply the gates from~\eqref{eq:wp-state-preparation}:
\begin{align*}
\ket{e_1} &\xrightarrow{B_{1,2}(P_1)} \sqrt{P_1}\ket{e_1} + \sqrt{1-P_1}\ket{e_2} \\
&\xrightarrow{B_{2,3}(P_2 / (1 - P_1))} \sqrt{P_1}\ket{e_1} + \sqrt{P_2}\ket{e_2} + \sqrt{1 - P_1 - P_2}\ket{e_3} \\
&\xrightarrow{B_{3,4}(P_3 / (1 - P_1 - P_2))} \sqrt{P_1}\ket{e_1} + \sqrt{P_2}\ket{e_2} + \sqrt{P_3}\ket{e_3} + \sqrt{1 - P_1 - P_2 - P_3}\ket{e_4} = \ket{W_P},
\end{align*}
where the final step utilizes the normalization $\sum_i P_i = 1$.

\paragraph{Logarithmic Synthesis} 
As shown in Ref.~\cite{cruz2019}, the $B$ gates can be arranged in a binary tree structure, reducing the circuit depth to $O(\log k)$. However, this implementation requires higher inter-connectivity in the QPU topology.

\paragraph{NISQ-Friendly Synthesis} 
The initial control of the $R_Y$ gate is redundant because the control qubit is initialized in $\ket{1}$; we thus replace $C(R_Y)$ with a standard $R_Y$ rotation. By placing the initial qubit at the center $\lfloor k / 2 \rfloor$ and applying $B$ gates outward in two branches, we halve the depth of the linear synthesis while maintaining NISQ-friendly linear connectivity. This optimized approach is employed in our hardware experiments.

\section{Ground State Proofs of Warm-Started Mixers}\label{sec:proofs-gs}

\begin{proof}[Proof of Corollary~\ref{cor:1}]
The operator $\mathcal{H}^G_P$ preserves the Hamming-weight 1 subspace because $[\mathcal{H}_{ij}(q), (I-Z)_i + (I-Z)_j] = 0$, as established in the proof of Proposition~\ref{th:Hp-Wp-groundstate}. Since $G$ is connected, there exists $n \in \mathbb{N}$ such that $\bra{e_i}(\mathcal{H}^G_P)^n\ket{e_j} \neq 0$ for all $i \neq j$, ensuring a path exists between any two qubits in the mixer topology. Given that the off-diagonal entries $\bra{e_i}\mathcal{H}^G_P\ket{e_j}$ are non-positive, the Perron-Frobenius theorem implies that the unique ground state must have strictly positive real entries, a property satisfied by $\ket{W_P}$.

It remains to show that $\ket{W_P}$ is an eigenstate. Since $G$ is regular with degree $d = \Delta(G)$, we have:
\begin{align}
    \mathcal{H}^G_P\ket{W_P} &= \frac{1}{d} \sum_{i,j}\mathcal{H}(q_{ij}) \ket{W_P} \\
    &= \frac{1}{d} \sum_{i = 1}^k \sum_{j \in N_G(i)} \left[ \sqrt{P_i} \frac{P_j - P_i}{P_i + P_j} \ket{e_i} - \frac{2P_i\sqrt{P_j}}{P_i + P_j} \ket{e_j} \right] \\ 
    &= -\frac{1}{d} \sum_{i=1}^k \sum_{j \in N_G(i)} \sqrt{P_i}\ket{e_i} = -\ket{W_P},
\end{align}
where $N_G(i)$ denotes the neighborhood of node $i$ with $|N_G(i)| = d$. Consequently, $\ket{W_P}$ is the unique ground state of $\mathcal{H}_P^G$ within the Hamming-weight 1 subspace, corresponding to an energy eigenvalue of $-1$.
\end{proof}

\begin{proof}[Proof of Corollary~\ref{cor:app}]
Since $[\ket{e_i}\bra{e_i}, Z_i] = 0$, the operator $\mathcal{H}_P^{G}$ preserves the Hamming-weight 1 subspace. As $G$ is connected, the Perron-Frobenius theorem implies that $\ket{W_P}$ is the unique ground state provided it is an eigenstate. We evaluate the action of $\mathcal{H}_P^G$ as follows:
\begin{align}
    \mathcal{H}^{G}_P\ket{W_P} &= \frac{1}{\Delta(G)}\sum_{i = 1}^k \sum_{j \in N_G(i)} \left[ \sqrt{P_i} \frac{P_j - P_i}{P_i + P_j} \ket{e_i} - \frac{2P_i\sqrt{P_j}}{P_i + P_j} \ket{e_j} \right] + \frac{1}{\Delta(G)} \sum_{i=1}^{k} (\deg(i) - \Delta(G)) \sqrt{P_i} \ket{e_i}\\ 
    &= \frac{1}{\Delta(G)} \sum_{i=1}^k \left[ -\deg(i) \sqrt{P_i}\ket{e_i} + (\deg(i) - \Delta(G))\sqrt{P_i}\ket{e_i} \right] = -\ket{W_P}.
\end{align}
Thus, $\ket{W_P}$ is the unique ground state of $\mathcal{H}_P^G$ in the Hamming-weight 1 sector with energy $-1$.
\end{proof}

\begin{proof}[Proof of Corollary~\ref{cor:2}]
Regardless of the connectivity of the mixer topology, $\mathcal{H}_P^{G}$ preserves the total Hamming weight because each constituent term commutes with the number operator, as shown previously. Furthermore, we know $\Delta(G) \geq 1$ since $E \neq \emptyset$.

When $G$ is disconnected, the Hamiltonian decomposes into separable sectors that can be diagonalized independently. Each sector corresponds either to a connected subgraph $G'(V', E') \subset G$ with $|V'| = k'$, or to an isolated node with a local Hamiltonian $H_l = -\ket{e_l}\bra{e_l}$ and a trivial ground state $\ket{e_l}$. From Corollary~\ref{cor:app}, we know that $\ket{W_{P'}} = \sum_{i\in V'} \sqrt{P'_i} \ket{e_i}$ (where $P'_i = P_i / \sum_{j\in V'} P_j$) is the unique ground state of $\mathcal{H}^{G'}_{P'}$ with energy $-1$. Rescaling and shifting the sector Hamiltonian as
\begin{align}
    \hat{\mathcal{H}}^{G'}_{P'} = \frac{\Delta(G')}{\Delta(G)}\mathcal{H}^{G'}_{P'} - \frac{\Delta(G) - \Delta(G')}{\Delta(G)}\sum_{i\in V'} \ket{e_i}\bra{e_i}
\end{align}
leaves the ground-state properties and the eigenvalue $-1$ intact. 

The global Hamiltonian $\mathcal{H}^G_P = \sum_{G'}\hat{\mathcal{H}}_{P'}^{G'} + \sum_l H_l$ possesses a degenerate ground-state manifold with energy $-1$, spanned by the individual sector ground states $\{\ket{W_{P'}}\}$ and $\{\ket{e_l}\}$. While a product of these states would exit the Hamming-weight-1 subspace, any normalized superposition remains within it. By choosing the specific superposition 
\begin{align}
    \sum_{V' \in \mathcal{S}} \sqrt{\sum_{i\in V'} P_i} \ket{W_{P'}} + \sum_l \sqrt{P_l}\ket{e_l} = \ket{W_P},
\end{align}
we confirm that $\ket{W_P}$ is a valid ground state of $\mathcal{H}_P^{G}$ with energy $-1$ for any mixer topology $G$ within the Hamming-weight-1 subspace.
\end{proof}

\section{Circuit Implementation of the Warm-Start XY-Block}\label{sec:proof-rotation}

\begin{proof}
We demonstrate that the following equality holds:
\begin{align*}
e^{-i\beta \mathcal{H}(q)} = (R_Z(\phi_1) \otimes I) U_{XY}(\phi_2) (I \otimes R_Z(-\phi_1))
\end{align*}
where the rotation angles are defined as:
\begin{align*}
\phi_1 = \mathrm{arctan2}\left((1-2q)\sin\beta, \cos\beta\right), \quad \phi_2 = \arcsin\left(2\sqrt{q(1-q)} \sin\beta\right).
\end{align*}
We focus the analysis on the $\{\ket{01}, \ket{10}\}$ subspace, as the identity holds trivially for the $\{\ket{00}, \ket{11}\}$ sector.

Expanding the left-hand side within the $\{\ket{01}, \ket{10}\}$ subspace yields:
\begin{align}\label{eq:rotproof-lhs}
e^{-i\beta \mathcal{H}(q)} = \begin{pmatrix} \cos\beta - i (1-2q)\sin\beta & 2i\sqrt{q(1-q)} \sin\beta \\ 2i\sqrt{q(1-q)} \sin \beta & \cos\beta + i (1-2q)\sin\beta \end{pmatrix}.
\end{align}

Similarly, the right-hand side in the same subspace is given by:
\begin{align}\label{eq:rotproof-rhs}
(R_Z(\phi_1) \otimes I) U_{XY}(\phi_2) (I \otimes R_Z(-\phi_1)) =
    \begin{pmatrix}
        e^{-i\phi_1}\cos{\phi_2} & i\sin{\phi_2} \\ i \sin{\phi_2} & e^{i\phi_1} \cos{\phi_2}
    \end{pmatrix}.
\end{align}
Substituting $\phi_2$ into the off-diagonal elements of Eq.~\eqref{eq:rotproof-rhs} directly recovers the off-diagonals of Eq.~\eqref{eq:rotproof-lhs}. 
Utilizing the identities $\cos(\arcsin x) = \sqrt{1-x^2}$ and $e^{\pm i \mathrm{arctan2}(y, x)} = (x \pm iy) / \sqrt{x^2 + y^2}$, we expand the diagonal terms:
\begin{align}
    e^{\mp i\phi_1}\cos{\phi_2} &= \frac{\cos{\beta} \mp i(1 - 2q) \sin{\beta}} {\sqrt{\cos^2\beta + \sin^2 \beta (1-2q)^2}} \sqrt{1 - 4q(1-q) \sin^2\beta} \nonumber \\
    &= \cos{\beta} \mp i(1 - 2q)\sin\beta.
\end{align}
The final simplification holds because $\cos^2\beta + (1-2q)^2\sin^2\beta = 1 - 4q(1-q)\sin^2\beta$, which cancels the denominator. Thus, the left-hand side~\eqref{eq:rotproof-lhs} and the right-hand side~\eqref{eq:rotproof-rhs} are identical.
\end{proof}

\section{Re-Optimization versus Fixed Parameters}
\label{app:reopt}

\begin{figure}
\centering
    \subfloat[Approximation ratio]{\includegraphics[width=0.95\linewidth]{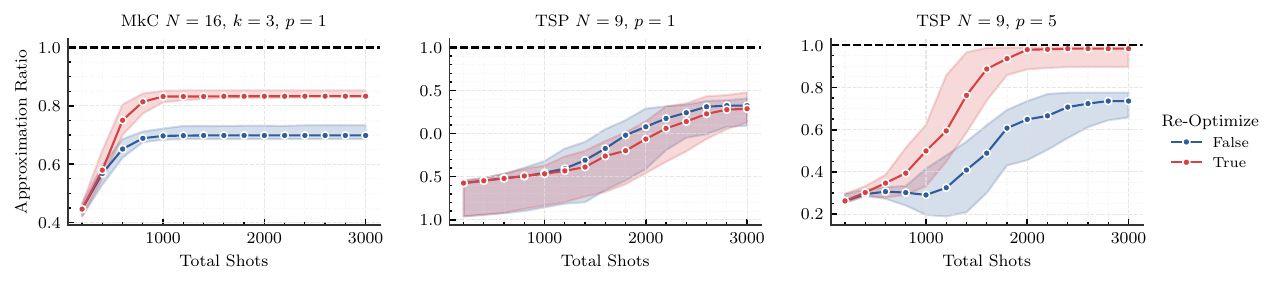}\label{fig:reopt-ar}}\\
    \subfloat[Approximation trace]{\includegraphics[width=0.95\linewidth]{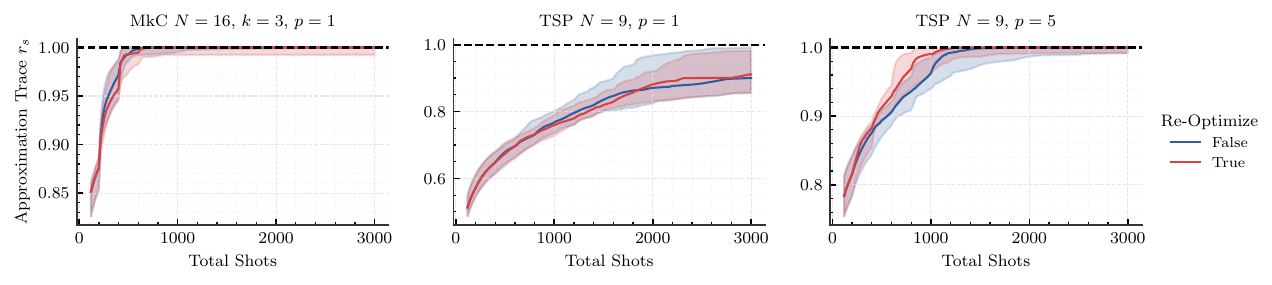}\label{fig:reopt-at}}
\caption{Comparison between fixing parameters at the first iteration and re-optimizing parameters every IWS iteration. Results showing the approximation ratio per iteration (a) and approximation trace in relation to shots (b). Data points show the median of 10 repetitions over 5 instances with error bars marking the IQR.\label{fig:reopt}}
\end{figure}

In Sec.~\ref{sec:iws-alg}, we argued that aligning the warm-started initial state with the warm-started XY-mixer keeps the optimal QAOA parameter region approximately invariant under iterative updates of the bias~$P^{(t)}$, allowing us to optimize the variational parameters only once at the start of Algorithm~\ref{alg:iws-qaoa}. The qualitative landscape analysis in Sec.~\ref{sec:parameter-validation} and Fig.~\ref{fig:landscape} supports this assumption for a representative TSP instance. To verify it quantitatively across instance classes, we compare two variants of IWS-QAOA:
\begin{itemize}
    \item The linear schedule~\eqref{eq:linear-schedule} is optimized \emph{once} against the uniform initial bias $P^{(0)}$ and held fixed throughout all IWS iterations, as described in Algorithm~\ref{alg:iws-qaoa}.
    \item The linear schedule is re-optimized at every iteration using BFGS, reusing the parameters of the previous iteration as initial points for the optimizer.
\end{itemize}
We evaluate both variants on 5 instances each of MkC ($k=3$, $N=16$) and TSP ($N=9$), with the TSP instances tested at both $p=1$ and $p=5$. Each configuration is repeated 10 times with $M=200$, $\varepsilon = 0.2$, and $\beta_T = 15$. Fig.~\ref{fig:reopt} reports the resulting approximation ratio and approximation trace.

Since the bias deforms the cost landscape at every iteration (as shown in Fig.~\ref{fig:landscape}), re-optimization is, in principle, expected to improve performance. Panel~\subref{fig:reopt-ar} confirms this: the per-iteration approximation ratio is higher under re-optimization for MkC and TSP at $p=5$, indicating that the QAOA landscape does shift with warm-starting and is therefore not strictly invariant. No improvement can be observed in the TSP $p=1$ instance, however. Overall, the magnitude of this shift is modest and does not translate into a meaningful improvement in the approximation trace (panel~\subref{fig:reopt-at}). The trace curves nearly coincide across all instances, with re-optimization marginally outperforming the fixed schedule only on TSP at $p=5$. The remaining cases show no noticeable difference in the trace.

From a practical standpoint, re-optimization incurs substantial classical optimization overhead that is disproportionate to the gain measured here: the marginal loss from reusing parameters is offset by only a handful of additional IWS shots, whereas a full optimizer run per iteration is considerably more expensive, especially on hardware. Therefore, we retain the optimize-once strategy in IWS-QAOA.

\section{Hardware Instance Qubit-Triplet Selection Problem}\label{sec:hw-optimization}

To generate the hardware-tailored problem instances, we first find all connected three-qubit paths in the hardware coupling map $G(V, E)$ with associated edge errors $w_{uv}$, and collect them into the triplet set $\mathcal{T}$. Furthermore, we group all pairs of non-overlapping triplets $t_1, t_2 \in \mathcal{T}, t_1 \cap t_2 = \emptyset$ that are connected by a coupler, $\exists (u, v) \in E: (u \in t_1 \wedge v\in t_2) \vee (u \in t_2 \wedge v\in t_1)$, into the interconnection set $\mathcal{I} \subset \mathcal{T}^2$. We assign each triplet $t \in \mathcal{T}$ a binary variable $x_t \in \{0, 1\}$ and each pair $(t,l) \in \mathcal{I}$ a binary variable $y_{tl}$, and formulate the following linear binary optimization problem:
\begin{align}
\begin{aligned}
    \max_{x,y}& \sum_{(t,l)\in \mathcal{I}} \left(1 - \frac{w_{tl}}{2W}\right)y_{t l} - \sum_{t\in\mathcal{T}} x_t\sum_{(i, j)\in (t^2 \cap E)}\frac{w_{ij}}{2W},\\
    \text{such that:}&\\
    & x_t + x_l \geq 2 y_{tl} \quad \forall (t,l) \in \mathcal{I}\\
    & x_t + x_l \leq 1 \quad\quad \forall t,l \in \mathcal{T}, t \neq l, t\cap l\neq \emptyset \\
    & \sum_{t\in \mathcal{T}}x_t = N
\end{aligned}
\end{align}
where $W = \max_{e\in E}w_e$, and $N$ is the number of triplets to select. Here, we associate $w_{tl}$ with the edge weight of the only existing edge between the connected triplets. Due to the heavy-hex topology, there is at most one edge per qubit-triplet pair. The objective maximizes the selected interconnections between triplets while minimizing the aggregated error of all used couplers, weighted by $1/{2W}$. We solve this optimization problem using the CP-SAT solver from Google OR-Tools~\cite{cpsatlp} to arrive at the ideal selection of qubit triplets, as visualized in Fig.~\ref{fig:coupling_map}.

\section{Isolation of the Post-Processing Effect}
\label{app:postprocessing}

\begin{figure}
\subfloat[Post-Selection]{\includegraphics[width=0.35\textwidth]{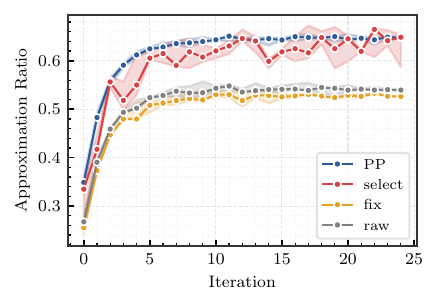}\label{fig:opt_postselect}}
\subfloat[Comparison of Heuristics]{\includegraphics[width=0.4\textwidth]{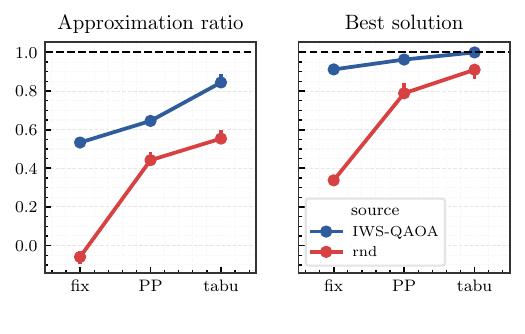}\label{fig:opt_comparison}}
\subfloat[Hamming Distance]{\includegraphics[width=0.25\textwidth]{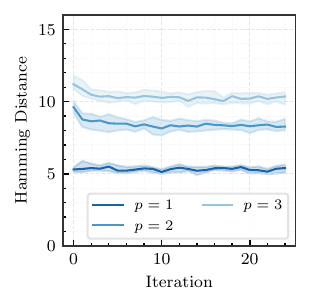}\label{fig:opt_hamming}}
\caption{Isolation of the post-processing contribution for the 144-bit instance. (a) Median approximation ratio per IWS iteration at $p=1$ and $M=200$ for the four sample-handling methods (raw, fix, select, PP). (b) Approximation ratio (left) and best solution found (right) for the three post-processors of increasing strength (fix, PP, tabu), applied to IWS-QAOA samples at iteration 10 and to random bit-strings (rnd). (c) Hamming distance between the raw and post-processed samples per IWS iteration for $p\in\{1,2,3\}$. All data points show the median with error bars marking the IQR.}
\label{fig:postprocessing}
\end{figure}

To isolate the effect of the classical post-processing on the hardware results, we proceed in three steps. We first study post-selecting only the raw feasible samples, which is possible at $p=1$, where the feasibility rate still allows some analysis. Next, we compare post-processing heuristics of increasing strength against one another for both quantum and random sample sources. Finally, we examine the number of bit-flips that the optimizer applies to each sample.

\subsection{Post-Selecting Instead of Processing}\label{app:selec}
The greedy descent of Sec.~\ref{sec:postprocessing} mainly recovers feasibility from measured infeasible samples by locally optimizing the cost of the quadratically penalized objective. To investigate its effect, we compare four modes of handling the raw measurements: every sample is scored directly through the non-penalized cost function (raw), infeasible one-hot assignments are randomly fixed (fix), only samples that satisfy all one-hot constraints are selected without application of a post-processor (select), and the greedy steepest-descent repair used in the main text (PP). Fig.~\ref{fig:opt_postselect} shows the median approximation ratio per IWS iteration at $p=1$ and $M=200$. This analysis is only statistically meaningful at $p=1$. At larger depths, the number of fully feasible samples collapses (cf.~Sec.~\ref{sec:postprocessing}), which is precisely the motivation for the repair step.

Crucially, the post-select curve tracks the PP curve closely. Since post-selection is an iteration-independent operation that cannot move a sample toward lower cost, this can only originate from IWS-QAOA concentrating the feasible quantum distribution on low-cost states. The results indicate that the cost-optimizing component of the greedy repair contributes only marginally to the final solution quality and that its principal role is the recovery of feasibility. The \emph{raw} and \emph{fix} curves improve in parallel but lie consistently below the other two. Notably, the \emph{fix} curve lies slightly below the \emph{raw} curve, meaning that randomly fixing constraints decreases QAOA performance, as many unfavorable configurations are selected by chance.

\subsection{Comparing Post-Processors}
Next, we analyze the performance of three post-processors on raw samples from the 10th iteration of the IWS-QAOA runs and on completely random bit-strings. The three heuristics are of increasing strength: PP and fix, as discussed in Sec.~\ref{app:selec}, are compared against a tabu search with a small tenure size of 1. Increasing the tabu tenure would solve the 144-bit instance to optimality from arbitrary input. Fig.~\ref{fig:opt_comparison} reports the mean approximation ratio and the best solution found.

Across all three post-processors, IWS-QAOA samples yield clearly better solutions than random ones. With \emph{fix}, even under a sub-optimal repair that does not optimize cost, IWS-QAOA attains a higher approximation ratio and best solution than random sampling combined with PP. The advantage is therefore present before any cost optimization is applied and is a property of the quantum distribution itself. Likewise, the stronger tabu post-processor benefits from the better initial samples provided by IWS-QAOA. We can therefore conclude that the performance increase is attributable to the IWS-QAOA sampler and not to the post-optimization step.

\subsection{Bit-Flip Analysis}\label{app:hamming}
Finally, we quantify how much the greedy repair modifies each sample by measuring the Hamming distance between the raw measurement and its post-processed counterpart. Fig.~\ref{fig:opt_hamming} shows this distance per IWS iteration for $p\in\{1,2,3\}$. At $p=1$, the optimizer flips on average only $\sim5$ of the $144$ bits ($\approx3.5\%$), rising to $\sim8$ at $p=2$ and $\sim10$ at $p=3$ as the noise-induced violation rate grows with depth. These values coincide closely with the expected number of violated constraints $N f(p)$, which evaluates to $5.3$, $8.6$, and $10.6$ for the failure rates $f(p)$ reported in Sec.~\ref{sec:postprocessing}. This confirms that the repair performs essentially a single flip per violated constraint, consistent with the analytical argument that post-processing freezes every already-feasible one-hot constraint and acts only on the small set of violating variables. For rnd-PP, we observe 36 bit-flips on average, corresponding to $25\%$ of the bit-string being modified.

\end{document}